\documentclass[11pt]{article}
\sloppy

\usepackage[margin=1in]{geometry}
\usepackage{hyperref}
\usepackage{amsfonts,amsmath,amsthm,wrapfig}
\usepackage{comment,fullpage}
\usepackage[noend]{algorithmic}
\usepackage[boxed]{algorithm}
\usepackage{xspace}
\usepackage{xcolor}
\usepackage{verbatim}
\usepackage{tikz}
\usetikzlibrary{arrows}

\usepackage{enumerate}
\usepackage{tcolorbox}

\newtheorem{theorem}{Theorem}[section]
\newtheorem{lemma}[theorem]{Lemma}

\newtheorem{corollary}[theorem]{Corollary}

\newcommand\E{\mathbb{E}}
\newcommand\R{\mathbb{R}}

\newcommand\disc{disc}

\newcommand{\set}[1]{\left\{ #1 \right\}}
\newcommand{\ip}[2]{\langle #1, #2 \rangle}

\newcounter{note}[section]

%\MakeRobust{\ref}% avoid expanding it when in a textual label

\makeatletter
\newcommand{\labeltext}[2]{%
  \@bsphack
  \csname phantomsection\endcsname % in case hyperref is used
  \def\@currentlabel{#1}{\label{#2}}%
  \@esphack
}
\makeatother

\date{\vspace{-5ex}}
\begin{document}
\title{The Gram-Schmidt Walk: \\ A Cure for the Banaszczyk Blues }

\author{Nikhil Bansal
\thanks{Department of Mathematics and Computer Science, Eindhoven University of Technology, Netherlands.
Email:
\href{mailto:n.bansal@tue.nl}{n.bansal@tue.nl}.
Supported by a NWO Vidi grant 639.022.211 and an ERC consolidator grant 617951.}
\and
Daniel Dadush\thanks{Centrum Wiskunde and Informatica, Amsterdam.
Email:
\href{mailto:dadush@cwi.nl}{dadush@cwi.nl}.
Supported by NWO Veni grant 639.071.510.
}
\and
Shashwat Garg\thanks{Department of Mathematics and Computer Science, Eindhoven University of Technology, Netherlands.
Email:
\href{mailto:s.garg@tue.nl}{s.garg@tue.nl}.
Supported by the Netherlands Organisation for Scientific Research (NWO) under project no.~022.005.025.
}
\and
Shachar Lovett\thanks{Department of Computer Science and Engineering, University of California, San Diego.
Email:
\href{mailto:slovett@cs.ucsd.edu}{slovett@cs.ucsd.edu}.
Supported by an NSF CCF award 1614023.
}
}

\maketitle

\begin{abstract}
An important result in discrepancy due to Banaszczyk states that for any set of $n$ vectors in $\mathbb{R}^m$ of $\ell_2$ norm at most $1$ and any convex body $K$ in $\mathbb{R}^m$ of Gaussian measure at least half, there exists a $\pm 1$ combination of these vectors which lies in $5K$. This result implies the best known bounds for several problems in discrepancy. Banaszczyk's proof of this result is non-constructive and a major open problem has been to give an efficient algorithm to find such a $\pm 1$ combination of the vectors.

In this paper, we resolve this question and give an efficient randomized algorithm to find a $\pm 1$ combination of the vectors which lies in $cK$ for $c>0$ an absolute constant.
This leads to new efficient algorithms for several problems in discrepancy theory.

%such as
%finding a coloring with discrepancy $O(\gamma_2(A)\log^{1/2}m)$, where $\gamma_2(A)$ is the $\gamma_2$ factorization norm of $A$,
%\nbnote{This is not correct. Sasho's paper also gets an $O(\log^2 n)$ approx.
%Actually, as herdisc is not an NP problem, it is not clear what approximation means here. If we mean just  the value
%of herdisc Sasho's paper already proves $O(log^{3/2}n)$.
%It might be cleanest to say we give a poly time algorithm for finding a coloring with disc $O(\gamma_2 \sqrt{\log n})$ for any subset of columns. Section 4.3 and intro should be modified accordingly}
%\sgnote{changed it accordingly, though used herdisc instead of gamma2 because that might me more well known?}
%and finding a coloring with $\ell_p$ discrepancy $O(p^{1/2})$ for the Koml\'{o}s problem. We also present a further generalization of Banaszczyk's result.
\end{abstract}

%\clearpage

\allowdisplaybreaks
\section{Introduction}

%\slnote{we use $\chi$ in the introduction, and $x$ in the body; why not just use $x$ for both?}

Let $(X,\mathcal{S})$ be a finite set system, with $X=\{1,2,\dots,n\}$ and $\mathcal{S}=\{S_1,S_2,\dots,S_m\}$ a collection of subsets of $X$. Given a two-coloring $x:X\rightarrow \{-1,1\}$, the discrepancy of a set $S$ is defined as $x(S) = |\sum_{i\in S} x(i)  |$ and measures the imbalance between the number of elements in $S$ colored $-1$ and $1$. The discrepancy of the set system $(X,\mathcal{S})$ is defined as
\[ disc(\mathcal{S}) = \min_{x:X\rightarrow \{-1,1\}} \max_{S\in\mathcal{S}} x(S) \]
and is the minimum imbalance achieved by all the sets in $\mathcal{S}$ over all possible two-colorings.

Discrepancy is defined more generally for any $m\times n$ matrix $A$ as,
\[ disc(A) = \min_{x\in\{- 1,1\}^n} \| Ax\|_{\infty} .\]
That is, minimum achievable $\ell_{\infty}$ norm of the vector $Ax$ over all two-colorings $x$ of the columns of $A$. This can be seen as a vector balancing problem: given vectors $v_1,\dots,v_n \in\mathbb{R}^m$ (specified by the columns of $A$), find a two-coloring $x: [n]\rightarrow \{-1,1\}$ to minimize $\| \sum_{i=1}^n x(i) v_i \|_{\infty}$. The set system view mentioned earlier corresponds to the special case where $A$ is the incidence matrix of the set system with columns indexed by elements in $X$ and rows by sets in $\mathcal{S}$.

Discrepancy is a widely studied topic and has applications to many areas in mathematics and computer science.
In particular in computer science, it arises naturally in computational geometry, data structure lower bounds, rounding in approximation algorithms, combinatorial optimization, communication complexity, pseudorandomness and differential privacy. For more on these connections we refer the reader to \cite{Chazelle,Mat09,Panorama, Sashothesis}.

One of the earliest techniques employed in discrepancy was linear algebraic in nature and similar to the well known iterated rounding technique \cite{B08,BF81,LRSbook}. Though this technique gave surprisingly good bounds for some problems in discrepancy, there remained a big gap between the bounds obtained and the lower bounds known for these problems. %Moreover, this technique failed to give any non-trivial bounds for some of the most important problems in discrepancy.

A huge breakthrough was made in the early 80's with Beck's partial coloring method \cite{Beck81b}, that was further refined by Spencer to the entropy method \cite{Spencer85}. A similar approach based on ideas from convex geometry was developed independently by Gluskin \cite{Glu89}. Roughly speaking, this method guarantees the existence of a coloring of a constant fraction of the elements where every set in the set system incurs a low discrepancy. This is then repeated $O(\log n)$ times in order to get a coloring of all the $n$ elements. The partial coloring method led to improved bounds for many problems in discrepancy and in particular, it led to the famous ``six standard deviations" theorem of Spencer \cite{Spencer85}: given a set system with $n$ sets and $n$ points, there exists a coloring of discrepancy at most $6\sqrt{n}$. This bound is tight up to constant factors.

While the original proofs of the partial coloring method were based on the pigeonhole principle and were non-algorithmic, over the past few years several new algorithmic versions of the partial coloring method have been developed \cite{B10,LM12,Ro14,HSS14,ES14}. In particular, all known applications of partial coloring \cite{Spencer85,Mat09} can now be made algorithmic. These ideas have also led to new results in approximation algorithms \cite{R13,BCKL14,BN15,NTZ13}.

Despite its huge success, the partial coloring method gives sub-optimal bounds for many problems. The reason is that the partial coloring step only colors a constant fraction of the points and hence must be repeated $O(\log n)$ times before all the $n$ points are colored. The discrepancies incurred at each of these steps are independent of each other and in fact can add up adversarially. Consider for instance the long-standing Beck-Fiala problem~\cite{BF81} about discrepancy of low-degree set systems, where every point lies in at most $t$ sets. Here, one round of partial coloring ensures a discrepancy of $O(\sqrt{t})$ to every set, but over all the $O(\log n)$ rounds the discrepancy ends up being $O(\sqrt{t}\log n)$. On the other hand, the Beck-Fiala conjecture is that the discrepancy of such set systems should be $O(\sqrt{t})$. Such logarithmic factor gaps also exist in several other problems in discrepancy for similar reasons.

\paragraph{Banaszczyk's Method.} One of the key results in discrepancy is the following result by Banaszczyk.
\begin{theorem}[\cite{B97}]\label{thm:bana}
Given any convex body $K\subseteq \mathbb{R}^m$ of Gaussian measure $\gamma_m(K)\ge 1/2$, and vectors $v_1,\dots,v_n\in\mathbb{R}^m$ of $\ell_2$ norm at most $1$, there exists a coloring $x:[n]\rightarrow\{-1,1\}$ such that $\sum_{i=1}^n x(i) v_i \in cK$, where $c$ is an absolute constant. In particular, $c=5$ suffices.
\end{theorem}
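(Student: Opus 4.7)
The plan is to prove the theorem probabilistically by constructing a (non-uniform) distribution $\mu$ on $\{-1,+1\}^n$ and showing $\Pr_\mu[\sum_i \epsilon_i v_i \in cK] > 0$, since this furnishes the required signing by a standard averaging argument. Uniformly random signs cannot work: the resulting sum has covariance $\sum_i v_i v_i^\top$ with operator norm as large as $n$, so it generically escapes any constant scaling of $K$. The distribution must instead bias the signs so that the partial sums $S_k := \sum_{i \le k}\epsilon_i v_i$ are steered into $cK$.

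The framework I would attempt is a potential-function / submartingale argument driven by Gaussian smoothing. Introduce
\[
\phi_k(y) \;=\; \gamma_{\sigma_k}(cK - y) \;=\; \Pr_{Z\sim\mathcal{N}(0,\sigma_k^2 I)}\bigl[\,y + Z \in cK\,\bigr],
\]
for a decreasing schedule $\{\sigma_k\}$ with $\sigma_n = 0$, so that $\phi_n(y) = \mathbf{1}[y\in cK]$. At step $k$, given the history, sample $\epsilon_k \in \{-1,+1\}$ according to a conditional rule determined by $\phi_k(S_{k-1}\pm v_k)$ (the simplest attempt being probabilities proportional to $\phi_k(S_{k-1}\pm v_k)$ itself, or a power thereof). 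Under the natural schedule $\sigma_{k-1}^2 = \sigma_k^2 + \|v_k\|^2$ one has the exact identity $\phi_{k-1}(y) = \E_{W\sim\mathcal{N}(0,\|v_k\|^2 I)}[\phi_k(y+W)]$. The aim is to establish the submartingale inequality
\[
\E\bigl[\phi_k(S_k)\mid S_{k-1}\bigr]\;\ge\;\phi_{k-1}(S_{k-1}),
\]
which telescopes to $\Pr[S_n\in cK] \ge \phi_0(0)$, reducing everything to a single Gaussian-measure computation on $cK$ smoothed at scale $\sigma_0$, which Theorem~\ref{thm:bana}'s hypothesis $\gamma_m(K)\ge 1/2$ should make accessible after appropriate choice of $c$.

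The main obstacle is closing the submartingale step. The identity above averages $\phi_k$ against a continuous Gaussian, while any single sign choice probes $\phi_k$ only at the two antipodes $y \pm v_k$. Pr\'ekopa--Leindler makes $\phi_k$ log-concave, but log-concavity yields $\phi_k(y) \ge \sqrt{\phi_k(y+v)\phi_k(y-v)}$, which is the \emph{wrong} direction for what we need; a Cauchy--Schwarz step on the $\phi_k$-weighted conditional rule only gives $\E[\phi_k(S_k)\mid S_{k-1}] \ge (\phi_k(y+v_k)+\phi_k(y-v_k))/2$, and matching this to the Gaussian average $\phi_{k-1}(y)$ is not formally justified by log-concavity alone. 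Bridging the gap will likely require a more refined potential --- for instance an exponential tilt $e^{-\lambda\|y\|^2}\phi_k(y)$ that injects the missing Gaussian curvature, or switching to $\log\phi_k$ and appealing to a Poincar\'e/semigroup-type inequality --- together with a sharp Gaussian isoperimetric input. Calibrating the schedule $\{\sigma_k\}$ against $c$ to squeeze the final constant down to $5$ (rather than some loose absolute constant) is a further delicate bookkeeping task that I would expect to dominate the technical difficulty.
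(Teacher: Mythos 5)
Your proposal is a genuinely different route from both the paper's argument and Banaszczyk's original proof, but as you yourself flag, it is not yet a proof: the submartingale inequality on which everything rests is left open, and that gap is real rather than merely ``delicate bookkeeping.'' The step you need, writing $\phi(y) := \Pr_{Z\sim\mathcal{N}(0,\sigma^2 I)}[y+Z\in cK]$, is that some $\{-1,+1\}$-mixture of $\phi(y+v)$ and $\phi(y-v)$ dominates $\E_{W\sim\mathcal{N}(0,\|v\|^2 I)}[\phi(y+W)]$. Even the strongest mixture, $\max(\phi(y+v),\phi(y-v))$, is not implied by log-concavity: Pr\'ekopa--Leindler gives $\phi(y)^2\ge\phi(y+v)\phi(y-v)$, which bounds the two-point geometric mean by $\phi(y)$ and says nothing about the right-hand side, which is $\phi$ smeared by a \emph{second} Gaussian rather than $\phi(y)$ itself. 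Structurally the trouble is that the pair $\{y\pm v\}$ probes $\phi$ along a single line while the Gaussian average is isotropic, and for anisotropic $K$ (slabs, products of intervals of disparate widths) there is no reason to expect the one-dimensional probe to dominate at every $y$ and $v$. The proposed rescues (tilting by $e^{-\lambda\|y\|^2}$, passing to $\log\phi$ and a Poincar\'e-type inequality) are plausible directions but not arguments, and the constant $c=5$ in the statement would demand a sharp quantitative version of whatever eventually closes the step.

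For contrast, the two proofs that do close take different paths around exactly this obstruction. Banaszczyk's original argument in \cite{B97} (the source of the statement, including $c=5$) stays with convex bodies rather than smoothed potentials: its engine is a lemma, proved via Ehrhard's inequality, that for convex $K$ with $\gamma_m(K)\ge 1/2$ and $\|v\|_2\le 1/5$ there is a \emph{convex} body $K*v\subseteq(K-v)\cup(K+v)$ with $\gamma_m(K*v)\ge 1/2$; one then peels off the $v_i$ one by one, producing a nested sequence of bodies each of measure at least $1/2$, and the last one is nonempty, which yields a valid signing. The present paper never manipulates Gaussian measures on the way to Theorem~\ref{thm:bana}: Theorem~\ref{thm:main} shows the Gram--Schmidt walk outputs a coloring $x$ with $\sum_i (x(i)-x_0(i))v_i$ $O(1)$-subgaussian, and Theorem~\ref{thm:subgaussian} (from \cite{DGLN16}) converts that subgaussianity into containment in $cK$ for an absolute (but unspecified) constant $c$. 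Your proposal sits conceptually between the two---a random walk as in the paper, carrying a Gaussian-measure potential as in Banaszczyk---and it is precisely at that seam that neither toolbox, as invoked, suffices.
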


Here the Gaussian measure $\gamma_m(S)$ of any measurable set $S\subseteq \mathbb{R}^m$ is defined as
\[\gamma_m(S)= \Pr[g \in S]= \int_{y\in S} \frac{1}{(2\pi)^{m/2}}  e^{-\| y\|^2/2}  dy \]
where $g$ is a standard Gaussian random vector in $\mathbb{R}^m$.

In contrast to the partial coloring method, Theorem~\ref{thm:bana} gives a full coloring directly, resulting in improved bounds for several problems in a direct black box way. For instance, Banaszczyk's result implies a discrepancy bound of $O(\sqrt{t\log n})$ for the Beck-Fiala problem and more generally, a $O(\sqrt{\log n})$ bound for the Koml\'{o}s problem. This follows by noting that a $O(\sqrt{\log m})$ scaling of the hypercube $[-1,1]^m$ has Gaussian volume at least half (using standard reductions, we can assume $m\le n$). Similarly,
\cite{MNT14} used it to find coloring with discrepancy at most $O(\sqrt{\log n})$ times the $\gamma_2$-norm, and \cite{Lar14}
used it to give update-query tradeoffs for dynamic data structures.

Theorem \ref{thm:bana} was also used in a very interesting non-black box way in a later work by Banaszczyk \cite{Bana12} to show improved bounds for several variants of the Steinitz problem. Recently, \cite{Sasho16} used this to obtain improved bounds for the Tusnady's problem.
%
%Banaszczyk himself in \cite{Bana12} used it to obtain improved bounds for the Steinitz problem,
%\cite{MNT14} used it to relate the $\gamma_2$-norm and hereditary discrepancy
%and to get an approximation algorithm for hereditary discrepancy, \cite{Lar14}
%used it to give update-query tradeoffs for dynamic data structures and
%\cite{Sasho16} used it to obtain improved bounds for the Tusnady's problem.
%\ddnote{This paragraph is not accurate enough as currently written, and should
%be reworked. Banaszczyk used the same proof technique to
%get a Steinitz bound (it does not seem to follow blackbox), MNT used it directly
%to get a bound on $\gamma_2$, not sure about Larsen (what happened there
%exactly?), and Sasho uses Banaszczyk's Steinitz result directly to improve
%Tusnady's bound.}
%\nbnote{Modified it.}

Banaszczyk's proof is highly elegant and based on deep ideas from convex geometry. However, his approach is non-algorithmic and finding an algorithmic version of Theorem~\ref{thm:bana} has been a major challenge in discrepancy \cite{R-convex14, Sashothesis, DGLN16}. Partial progress was made on this recently  \cite{BDG16, BG17,harish17} and algorithms achieving the same bounds as Banaszczyk for the Koml\'{o}s problem and for the Steinitz problem in the $\ell_\infty$ norm were obtained. Roughly, these results correspond to the case when the convex body $K$ in Theorem~\ref{thm:bana} is a scaling of the hypercube, and the question about general convex bodies remained open.

In a recent result, \cite{DGLN16} reformulated Banaszczyk's result in terms of
certain subgaussian distributions and reduced the question of finding an
algorithmic version of Banaszczyk's result to constructing such subgaussian
distributions. To state this result, we first need some definitions. A random
vector $Y$ taking values in $\mathbb{R}^m$ is said to be subgaussian with
parameter $\sigma$ (or $\sigma$-subgaussian) if for all $\theta\in \mathbb{R}^{m}$,
\[ \E\left[ e^{\langle \theta,Y\rangle }  \right] \le  e^{(\sigma^2/2)\| \theta \|_2^2} .\]
Observe that a standard Gaussian random vector is 1-subgaussian. We can now state the result in \cite{DGLN16}. Let (P1) denote the following statement\labeltext{(P1)}{stp1}: \\

\begin{minipage}{35em}
\flushleft
``Let $v_1,\ldots,v_n \in \R^m$ be vectors of $\ell_2$ norm at most $1$ and let
$x_0\in[-1,1]^n$. Then there exists a distribution $D$ on $\{-1,1\}^n$,
such that for $x$ sampled from $D$, the random variable $\sum_{i=1}^n (x(i)-x_0(i)) v_i$ is
$\sigma$-subgaussian, for some absolute constant $\sigma>0$.
Moreover, for every $i$ for which $x_0(i) \in \{-1,1\}$ we have $x(i)=x_0(i)$ with probability one."\\
\end{minipage}
\begin{minipage}{5em}
\flushright
(P1)\\
\end{minipage}

\bigskip

\begin{theorem}[\cite{DGLN16}]\label{thm:subgaussian}
Theorem \ref{thm:bana} (up to the exact value of $c$) is equivalent to the statement \ref{stp1}.

More precisely, for $v_1,\dots,v_n \in \R^m$ of $\ell_2$ norm at most $1$ and a convex body $K
\subseteq \R^m$ of Gaussian measure at least $1/2$, there exists $x_0 \in
[-1,1]^n$ such that
\begin{itemize}
\item[(i)] $\sum x_0(i) v_i \in K$,
\item[(ii)] $\Pr_{x \sim D}[\sum_{i=1}^n x(i) v_i \in cK] \ge 1/2$,
\end{itemize}
where $D$ is as in $\ref{stp1}$ and $c := c(\sigma)$ is an absolute constant.

Furthermore, for a symmetric convex body (i.e. $K=-K$), the choice $x_0 = 0$
suffices, and for a general convex body given by a membership oracle, $x_0$ can
be computed in expected polynomial time.
\end{theorem}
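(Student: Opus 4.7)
My plan is to establish the equivalence in both directions and then address the algorithmic construction of $x_0$. I will handle the symmetric case of the forward direction cleanly, then reduce the general case to it, and finally sketch the reverse direction.

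Forward direction (symmetric $K$): Since $K = -K$ is convex and nonempty (as $\gamma_m(K) \geq 1/2$), the origin lies in $K$, so $x_0 = 0$ is feasible: $\sum x_0(i)v_i = 0 \in K$. Applying \ref{stp1} yields a distribution $D$ on $\{-1,1\}^n$ under which $Y := \sum_i x(i) v_i$ is $\sigma$-subgaussian. The conclusion $\Pr[Y \in cK] \geq 1/2$ for an absolute constant $c = c(\sigma)$ then follows from a standard subgaussian-vs.-Gaussian measure comparison on symmetric convex bodies: since $\E[e^{\langle\theta,Y\rangle}] \leq e^{\sigma^2\|\theta\|^2/2}$ matches the Gaussian MGF up to the factor $\sigma$ in every direction, a Markov/exponential-moment argument against the support function of $K$ (equivalently, integrating against the polar body $K^\circ$) gives the required measure bound.

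Forward direction (general $K$): I first show that the zonotope $S := \{\sum_i x(i) v_i : x \in [-1,1]^n\}$ meets $K$. If not, Hahn-Banach gives $\theta \neq 0$ with $\sup_{y \in S} \theta \cdot y < \inf_{y \in K} \theta \cdot y$; the LHS equals $\sum_i |\theta \cdot v_i| \geq 0$, while the RHS is $\leq 0$ because $\gamma_m(\{y : \theta\cdot y > 0\}) = \tfrac12$ forces $K$ to meet the closed halfspace $\{y : \theta \cdot y \leq 0\}$, a contradiction. Pick any such $x_0 \in [-1,1]^n$ and set $y_0 := \sum_i x_0(i)v_i \in K$; applying \ref{stp1} with this $x_0$ gives $\sum_i x(i) v_i = Y + y_0$, where $Y$ is $\sigma$-subgaussian. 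The remaining step $\Pr[Y + y_0 \in cK] \geq 1/2$ is the main technical hurdle, since $y_0$ may lie far from any Gaussian centroid of $K$. I would resolve this by symmetrizing $K$ about $y_0$: the body $\tilde K := K \cap (2y_0 - K)$ is centrally symmetric about $y_0$, and by a Gaussian correlation / Pr\'ekopa-Leindler argument the shifted body $\tilde K - y_0$ retains enough Gaussian measure for the symmetric comparison from the previous paragraph to apply.

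Reverse direction and algorithm: To derive \ref{stp1} from Theorem~\ref{thm:bana}, I would construct $D$ by a minimax / LP-duality argument. For each test direction $\theta$, Banaszczyk applied to the slab $K_\theta := \{y : |\theta \cdot y| \leq t\}$ with $t$ chosen so $\gamma_m(K_\theta) \geq 1/2$ produces a deterministic $\pm 1$ coloring controlling $|\theta \cdot \sum_i x(i)v_i|$; a Sion-type minimax over distributions on colorings versus directions $\theta$ then yields a single distribution $D$ whose MGF is bounded in every direction, giving the $\sigma$-subgaussian property. For the algorithmic construction of $x_0$: the convex feasibility problem $\{x \in [-1,1]^n : Ax \in K\}$ has been shown nonempty, and given the membership oracle for $K$ the ellipsoid method of Gr\"otschel-Lov\'asz-Schrijver finds a point in it in expected polynomial time (with a routine perturbation to ensure positive volume in the edge case of empty interior). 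The two most delicate steps are (a) the shifted measure-comparison for off-center $y_0 \in K$ in the forward general case, requiring that the symmetrization $\tilde K - y_0$ inherits enough Gaussian mass; and (b) the minimax argument in the reverse direction, which needs compactness/duality to produce a single distribution satisfying all directional tail bounds simultaneously.
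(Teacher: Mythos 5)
The paper imports this theorem from~\cite{DGLN16} without proof, so your sketch is evaluated on its own merits. The symmetric case is structurally correct (the subgaussian-to-Gaussian-measure transfer is itself a real lemma, needing chaining/majorizing measures or an equivalent tool rather than a single-direction exponential-moment bound, but you flag it as such). The general-$K$ case, however, has a genuine gap. You take an \emph{arbitrary} $x_0$ with $y_0 := \sum_i x_0(i)v_i \in K$ and then symmetrize via $\tilde K := K \cap (2y_0 - K)$, claiming $\gamma_m(\tilde K - y_0)$ stays bounded below. This is false. Take $n=1$, $v_1 = e_1$, and $K = \{y : y_1 \ge -1\}$, so $\gamma_m(K) = \Phi(1) > 1/2$ and the zonotope is $S = [-1,1]\,e_1$. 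The point $y_0 = -0.999\,e_1$ lies in $S \cap K$, yet $\tilde K - y_0 = \{y : |y_1| \le 0.001\}$ has Gaussian measure roughly $8\times 10^{-4}$. No Gaussian correlation or Pr\'ekopa--Leindler inequality can rescue this: the mass genuinely collapses as $y_0$ approaches $\partial K$. The crux of~\cite{DGLN16}'s general case is precisely that $x_0$ must be chosen so that $K$ is approximately balanced around $y_0$, and that such a point can be shown to lie in the zonotope. This is also why the theorem's algorithmic claim (finding $x_0$ in expected polynomial time) is more delicate than a bare ellipsoid call on $\{x \in [-1,1]^n : Ax \in K\}$: membership in that set gives condition~(i) but is far from sufficient for~(ii).

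A secondary issue is the reverse direction. Your minimax requires the payoff to be quasi-concave in the adversary's variable $\theta$, but $\log \E_D[e^{\langle\theta, Y\rangle}] - (\sigma^2/2)\|\theta\|_2^2$ is a difference of two convex functions of $\theta$, so it is neither convex nor quasi-concave in $\theta$ and Sion's theorem does not apply as invoked; the derivation of~\ref{stp1} from Theorem~\ref{thm:bana} in~\cite{DGLN16} uses a different mechanism.
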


The above theorem implies that to get a constructive version of
Theorem~\ref{thm:bana} for any convex body it suffices to give an algorithm that
can efficiently sample from the $\sigma$-subgaussian distributions
in~\ref{stp1}. Furthermore, restricting the sampler to the choice $x_0 = 0$,
gives a \emph{universal algorithm} for finding colorings landing inside any
\emph{symmetric convex body} of Gaussian measure at least $1/2$.

\subsection{Main Result}
In this paper we give the first efficient algorithm for obtaining Banaszczyk's
result (Theorem~\ref{thm:bana}), which also yields a new constructive proof, by
providing a new random walk procedure to sample from the requisite subgaussian
coloring distributions. We dub this procedure the Gram-Schmidt walk and
state its guarantees below:

\begin{theorem}[Gram-Schmidt Walk]
\label{thm:main}
There is a polynomial-time randomized algorithm that, given as input vectors $v_1,\dots,v_n\in\mathbb{R}^m$ of $\ell_2$ norm at most $1$ and $x_0\in [-1,1]^n$, outputs a coloring $x\in\{-1,1\}^n$ such that
the random variable $\sum_{i=1}^n (x(i)-x_0(i))v_i$ is subgaussian with
parameter $\sigma = \sqrt{40} \approx 6.32$.
\end{theorem}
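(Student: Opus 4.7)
I would produce $x$ as the endpoint of a random walk $(X_t)_{t\ge 0}$ in $[-1,1]^n$, starting at $X_0 = x_0$, whose update directions $u_t$ are chosen so that the images $Au_t := \sum_i u_t(i) v_i$ are pairwise orthogonal in $\R^m$. This orthogonality is the engine of the proof: it makes the total displacement $A(x-x_0) = \sum_t \delta_t Au_t$ a sum of orthogonal, bounded, mean-zero martingale increments -- exactly the structure one needs for a subgaussian tail.

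Concretely, at step $t$ I maintain the alive set $\mathcal{A}_t := \{i : X_t(i) \in (-1,1)\}$ and pick a pivot $p_t \in \mathcal{A}_t$ by a rule that keeps the previous pivot alive as long as possible (selecting a new one only when the old one becomes tight). The direction $u_t \in \R^n$ is supported on $\mathcal{A}_t$ with $u_t(p_t) = 1$, and its other entries are chosen so that $Au_t$ equals the component of $v_{p_t}$ orthogonal to $\mathrm{span}\{Au_s : s < t\}$ -- that is, Gram-Schmidt of $v_{p_t}$ against the previous image directions, pulled back to coordinates via a least-norm solve. Set $X_{t+1} := X_t + \delta_t u_t$, where $\delta_t$ is a centered Bernoulli on $\{-\delta_t^-, \delta_t^+\}$ with $\delta_t^\pm$ the largest positive/negative multiples of $u_t$ keeping the walk inside $[-1,1]^n$. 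Each step makes at least one alive coordinate tight, so the walk terminates in $T \le n$ steps at some $X_T =: x \in \{-1,1\}^n$; each step is polynomial-time since it reduces to Gram-Schmidt plus a linear solve.

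The subgaussian analysis then proceeds as follows. Fix $\theta \in \R^m$ and set $\alpha_t := \langle \theta, Au_t\rangle$, so $\langle \theta, A(x - x_0)\rangle = \sum_t \delta_t \alpha_t$. Conditionally on $\mathcal{F}_{t-1}$, both $\alpha_t$ and the interval $[-\delta_t^-, \delta_t^+]$ are determined and $\delta_t$ is a centered, bounded random variable, so Hoeffding's lemma gives
\[ \E\!\left[e^{\lambda \delta_t \alpha_t} \,\middle|\, \mathcal{F}_{t-1}\right] \;\le\; \exp\!\left(\tfrac{\lambda^2 \alpha_t^2 (\delta_t^+ + \delta_t^-)^2}{8}\right). \]
Iterating over $t$ yields $\E\bigl[e^{\langle \theta, A(x-x_0)\rangle}\bigr] \le \E\bigl[\exp\bigl(\tfrac{1}{8}\sum_t \alpha_t^2 (\delta_t^+ + \delta_t^-)^2\bigr)\bigr]$. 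Two deterministic bounds close the argument: the Gram-Schmidt construction forces $\|Au_t\| \le \|v_{p_t}\| \le 1$ with $\{Au_t\}$ orthogonal, so Bessel's inequality gives $\sum_t \alpha_t^2 \le \|\theta\|^2$; and $u_t(p_t) = 1$ together with $X_t, X_{t+1} \in [-1,1]^n$ forces $\delta_t^+ + \delta_t^- \le 2$. Combining, the exponent is bounded pointwise by an absolute constant times $\|\theta\|^2$, so $A(x - x_0)$ is $O(1)$-subgaussian.

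The main obstacle is sharpening this $O(1)$ to the specific value $\sigma = \sqrt{40}$. A direct Hoeffding bound as above wastes a factor, because the true variance of the centered Bernoulli $\delta_t$ is $\delta_t^+ \delta_t^-$ rather than $(\delta_t^+ + \delta_t^-)^2/4$; a sharper per-step MGF estimate, combined with the telescoping identity $\|A(x - x_0)\|^2 = \sum_t \delta_t^2 \|Au_t\|^2$ that falls out of orthogonality, together with a careful accounting across successive pivot regimes (to correctly distribute $\|\theta\|^2$ among the $\alpha_t$'s weighted by $\|Au_t\|^2$), should deliver the claimed constant. A secondary issue is handling degenerate steps where $Au_t = 0$: the update then leaves $AX_t$ unchanged but must still tighten a coordinate -- the pivot-freezing rule above handles this automatically and ensures the polynomial-time termination claim.
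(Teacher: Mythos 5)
There is a genuine gap, and it lies at the core of your construction. You define $Au_t$ as the component of the pivot vector $v_{p_t}$ orthogonal to $\mathrm{span}\set{Au_s : s < t}$, the span of the \emph{previous walk directions}, and then "pull back to coordinates via a least-norm solve." The paper instead projects $v_{n(t)}$ orthogonal to $V_t = \mathrm{span}\set{v_i : i \in A_t,\ i \neq n(t)}$, the span of the \emph{currently alive non-pivot vectors}. These are not the same subspace, and the distinction is load-bearing. Your choice guarantees pairwise orthogonality of $\set{Au_t}$ by construction, which is what makes your Bessel-plus-Hoeffding analysis so clean. But it does not guarantee that a valid update direction $u_t$ exists: you need $Au_t - v_{p_t} = -\Pi_{\mathrm{span}\set{Au_s : s<t}}(v_{p_t})$ to lie in $\mathrm{span}\set{v_i : i \in \mathcal{A}_t,\ i \neq p_t}$, and once earlier alive vectors have been frozen out of $\mathcal{A}_t$, the subspace $\mathrm{span}\set{Au_s : s<t}$ can contain directions not spanned by what remains. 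Concretely, take $v_1 = e_1$, $v_2 = (e_1+e_2)/\sqrt{2}$, $x_0 = 0$. At $t=1$ the pivot is $2$, $W_1 = \set{0}$, so $Au_1 = v_2$ and $u_1 = (0,1)$; one step freezes coordinate $2$. At $t=2$ only coordinate $1$ is alive, so $u_2$ must be a multiple of $e_1$ with $u_2(1)=1$, giving $Au_2 = e_1$ -- but you demand $Au_2 = \Pi_{v_2^\perp}(e_1) = (e_1-e_2)/2$, which has no preimage. The walk cannot proceed as specified, and if you fall back to $Au_2 = e_1$, orthogonality with $Au_1$ is lost and Bessel no longer applies.

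Relatedly, even when $u_t$ does exist, your rule produces degenerate steps: if the pivot $p_t$ survives to step $t$ (because a non-pivot froze), then $v_{p_t}$ already lies in $\mathrm{span}\set{Au_s : s<t}$ by the definition of $Au_{t-1}$, so $Au_t = 0$. You wave at this ("the pivot-freezing rule handles this automatically"), but a zero image direction needs a nontrivial $u_t$ in the kernel of $A$ restricted to $\mathcal{A}_t$ with $u_t(p_t)=1$, which again need not exist once the vectors are made linearly independent by preprocessing. The paper's choice of $V_t$ sidesteps both problems -- the required coefficients trivially exist because $\Pi_{V_t}(v_{n(t)}) \in V_t$ -- but the price is that the $v^\perp(t)$ are \emph{not} pairwise orthogonal: within a phase $v^\perp(t+1) = v^\perp(t) + \Pi_{V_t/V_{t+1}}(v_{n(f_t)})$, so $\langle v^\perp(t), v^\perp(t+1)\rangle = \|v^\perp(t)\|^2 > 0$. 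The orthogonal structure the paper actually exploits lives one level up, in the mutually orthogonal quotient subspaces $V_{i-1}/V_i$ and the per-phase subspaces $V_{t^b_k-1}/V_{t^e_k}$ (Lemmas~\ref{lem:perpt} and~\ref{lem:discphase}); recovering the subgaussian bound from this weaker structure is exactly what forces the phase decomposition, the good/bad time split, and the potential-function argument of Section~\ref{sec:analysis}. Your clean one-line Bessel argument is the correct intuition for the idealized case where only pivots ever freeze (the paper says as much in Section~\ref{sec:mainideas}), but it does not extend to the actual process, and your modified algorithm that would make it extend is not well-defined.
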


% for all $\theta\in \mathbb{R}^{m}$,
% \[ \E[ e^{disc(\theta) }  ] \le  e^{(\sigma^2/2)\| \theta \|_2^2} \]
% where  $disc(\theta) = \langle \theta, \sum_{i=1}^n (x(i)-x_0(i)) v_i \rangle$ and $\sigma>0$ is an absolute constant. In particular $\sigma=\sqrt{40} \approx 6.32$ suffices.

%\nbnote{Can be improved further by fixing some calculations later. Added some comments later}
%\slnote{add what is the value of $c$ in B. thm that we get}

Our algorithm in fact runs in time of $O(n \cdot (n+m)^{\omega})$, where
$\omega$ is the exponent of matrix multiplication.  In particular, it runs in
$n$ iterations (one variable gets colored at each iteration), and in each iteration
the most expensive step is solving a linear system in $n$ variables and $m$
equations, which can be done in time $O((n+m)^{\omega})$.
%\sgnote{should we give more details here? }\slnote{added a bit}

%Theorem~\ref{thm:main} together with the fact that an $O(\sqrt{\log n})$ scaling of the $\ell_\infty$-ball has gaussian volume $1/2$,  directly gives an algorithm that gets an $O(\sqrt{\log n})$ discrepancy bound for

%\ddnote{Added more detail.}

Comparing to the previous Banaszczyk inspired random walk
samplers, \cite{BDG16,BG17} guaranteed that for each
$i\in[m]$, $\langle e_i, \sum_{i=1}^n (x(i)-x_0(i))v_i \rangle$ is
$O(1)$-subgaussian, where $e_1,\dots,e_m$ are the standard basis vectors in
$\mathbb{R}^m$, and~\cite{DGLN16} guaranteed full $O(\sqrt{\log
n})$-subgaussianity (i.e.~along all directions). That is, prior work either
achieved $O(1)$-subgaussanity only for coordinate directions or full
$O(\sqrt{\log n})$-subgaussianity, whereas Theorem~\ref{thm:main} gives the
first full $O(1)$-subgaussianity guarantee.

Interestingly, for the previous samplers mentioned above, each walk step is
indexed by the solution to a semidefinite program, whereas in contrast, the walk
steps for the new sampler require only basic linear algebra, namely, the Gram-Schmidt orthogonalization.
We note however that the idea for the walk was inspired by the constructive
proof of~\cite{DGLN16} for the existence of solutions to the Koml{\'o}s vector
coloring program of Nikolov~\cite{N13}, where the Gram-Schmidt orthogonalization
plays a crucial role in the analysis.

\subsection{Other Results}
\label{sec:introapp}
Theorem~\ref{thm:main} directly gives an algorithm for previous black box applications of Banaszczyk's result.

%Previous algorithms in discrepancy had to be catered to run specifically for the particular application at hand.

\paragraph{Koml\'{o}s problem.} This is a generalization of the Beck-Fiala problem and is defined as follows: given an $m\times n$ matrix $A$ with columns of $\ell_2$ norm at most one, find a coloring $x\in\{-1,1\}^n$ to minimize $disc(A)=\| Ax\|_\infty$.
%The Beck-Fiala problem is the special case where $A$ is the incidence matrix of the given set system, and scaling every entry by a factor of $1/\sqrt{t}$.
The Koml\'{o}s conjecture \cite{spencerbook} states that $disc(A)=O(1)$ and is a generalisation of the Beck-Fiala conjecture.
Theorem~\ref{thm:bana} directly gives $O(\sqrt{\log n})$ bound for the Koml\'{o}s problem \cite{B97}. While algorithms to find such a coloring were recently given in \cite{BDG16,BG17,harish17}, theorem~\ref{thm:main} gives another, more direct, algorithm to find such a coloring.
%
%\begin{corollary}
%\label{thm:komlos}
%There is an efficient randomized algorithm which given an $m\times n$ matrix $A$ having all columns of $\ell_2$ norm at most one, finds a coloring $x\in \{-1,1\}^n$ such that $\|Ax\|_\infty=O(\sqrt{\log n})$.
%\end{corollary}
%
%\bigskip
%
%Theorem~\ref{thm:main} directly an algorithm to find a coloring of discrepancy $O(\sqrt{\log n})$ for the Koml\'{o}s problem. Algorithm achieving this were recently given in \cite{BDG16,BG17,harish17} and we get another, more direct algorithm for this. The Koml\'{o}s problem is defined as follows: given an $m\times n$ matrix $A$ having all columns of $\ell_2$ norm at most one, find a coloring $x\in\{-1,1\}^n$ to minimize $disc(A)=\| Ax\|_\infty$. The Beck-Fiala problem is the special case of this when the matrix $A$ is taken to be the incidence matrix of the given set system, and scaling every entry by a factor of $1/\sqrt{t}$. The Koml\'{o}s conjecture \cite{spencerbook} states that $disc(A)=O(1)$ and is a generalization of the Beck-Fiala conjecture. The best known bound for the Koml\'{o}s problem is $O(\sqrt{\log n})$ \cite{B97} and follows easily from Theorem~\ref{thm:bana}.
%% This also implies an $O(\sqrt{t\log n})$ bound for the Beck-Fiala problem.
%Algorithms to find such a coloring were recently given in
%\cite{BDG16,BG17,harish17}. Theorem~\ref{thm:main} gives another, more direct, algorithm to find such a coloring.

\paragraph{$\ell_p$ Discrepancy.} For $p\in[1,\infty)$, the $\ell_p$ discrepancy of an $m\times n$ matrix $A$ under a coloring $x$ is defined as $\left( \frac{1}{m} \| Ax\|_p^p \right)^{1/p}.$
Matousek~\cite{matouseklp} showed an $\ell_p$ discrepancy bound of $O(p t^{1/2})$ for the Beck-Fiala problem, using partial coloring methods. An improved bound of $O(p^{1/2})$ for the more general Koml\'os
setting follows directly from Banaszczyk's result (and a standard estimate of the Gaussian measure of $\ell_p$-ball). Our result gives an algorithmic version of this bound.
 % The usual $\ell_\infty$ case corresponds to
%$p= \log m = O(\log n)$, and hence this bound is consistent with the $O(t^{1/2} \log n)$ bound based on partial coloring methods.
%We extend Matousek's result by giving a bound on the $\ell_p$ discrepancy for any matrix $A$ with columns of $\ell_2$ norm at most one, and further improve the dependence on $p$ from linear to $\sqrt{p}$.
%\ddnote{This is a direct consequence of Banaszczyk's theorem. Note that
%$\E_{X \sim N(0,I_n)}[\|X\|_p] = O(n^{1/p} \min \set{\sqrt{p}, \sqrt{\log n}})$,
%and hence this scaling of the $\ell_p$ ball has Gaussian measure $1/2$. Let's
%make this a remark.}
\begin{corollary}
\label{thm:lpdiscrepancy}
There is an efficient randomized algorithm which
given a matrix $A$ with $n$ columns of $\ell_2$ norm at most $1$, and $p\in [1,\infty)$, finds a $\{-1,1\}^n$  coloring
with expected $\ell_p$ discrepancy $O(\sqrt{p})$.
\end{corollary}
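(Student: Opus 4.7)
The plan is to apply Theorem~\ref{thm:main} as a black box to the columns of $A$, and then extract a tail/moment bound on each coordinate of $Ax$ from subgaussianity.

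First, I would run the Gram-Schmidt walk of Theorem~\ref{thm:main} on the columns $v_1,\ldots,v_n$ of $A$ (which by assumption have $\ell_2$ norm at most $1$) with starting point $x_0=0$. This produces in polynomial time a random coloring $x\in\{-1,1\}^n$ such that $Y := Ax = \sum_{i=1}^n x(i)v_i$ is $\sigma$-subgaussian with $\sigma=\sqrt{40}$. This is the only use of Theorem~\ref{thm:main}; the rest is a standard moment-method calculation.

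Next, I would fix any row index $j\in[m]$ and note that the scalar $Y_j=\langle e_j,Y\rangle$ is $\sigma$-subgaussian as a real random variable, because plugging $\theta=te_j$ into the vector-subgaussian definition yields $\E[e^{tY_j}]\le e^{\sigma^2 t^2/2}$ for all $t\in\R$ (using $\|e_j\|_2=1$). The standard equivalence between subgaussian moment-generating-function bounds and $L^p$ growth then gives an absolute constant $C$ with $\E|Y_j|^p \le (C\sigma)^p p^{p/2}$ for every $p\ge 1$. Summing over $j$,
\[
\E\bigl[\|Y\|_p^p\bigr] \;=\; \sum_{j=1}^m \E|Y_j|^p \;\le\; m\cdot (C\sigma)^p p^{p/2}.
\]

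Finally, since $t\mapsto t^{1/p}$ is concave on $[0,\infty)$ for $p\ge 1$, Jensen's inequality yields
\[
\E\!\left[\Bigl(\tfrac{1}{m}\|Ax\|_p^p\Bigr)^{1/p}\right] \;\le\; \Bigl(\tfrac{1}{m}\E\|Ax\|_p^p\Bigr)^{1/p} \;\le\; C\sigma\sqrt{p} \;=\; O(\sqrt{p}),
\]
which is the desired bound on the expected $\ell_p$ discrepancy.

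There is no real obstacle: the subgaussian moment inequality is standard, and the only point to verify is that vector-subgaussianity with parameter $\sigma$ transfers to each coordinate with the same parameter, which is immediate from testing the definition on $\theta=te_j$. The corollary is thus essentially a direct black-box consequence of Theorem~\ref{thm:main}.
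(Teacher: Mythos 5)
Your proposal is correct and takes essentially the same route as the paper: apply Theorem~\ref{thm:main} with $x_0=0$ to get $Y=Ax$ subgaussian, observe each coordinate $Y_j$ is $\sigma$-subgaussian, invoke the standard $p$-th moment bound for subgaussian scalars, and sum over coordinates. If anything, you are slightly more explicit than the paper, which stops at the bound on $\E[\|Y\|_p^p]$ and leaves the Jensen step converting $\E[\|Y\|_p^p]\le m(C\sigma)^p p^{p/2}$ into a bound on $\E[(\tfrac{1}{m}\|Y\|_p^p)^{1/p}]$ to the reader; you spell it out, which is a good idea since the corollary is literally about expected $\ell_p$ discrepancy, not the $\ell_p$ norm of the expectation.
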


Interestingly, a single algorithm given by Theorem~\ref{thm:main} produces a (random) coloring that simultaneously achieves this bound for every $p\in[1,\infty)$.
In contrast, the coloring in Banaszczyk's approach depends on the body $K$ which is different for different values of $p$.
%\sgnote{the last line seems ambiguous }

%\bigskip

%\sgnote{change the title here?}
\paragraph{Discrepancy relative to $\gamma_2$-norm.} Given an $m\times n$ matrix $A$ and a set $J\subseteq [n]$, let $A_{|J}$ denote the $m\times |J|$ matrix restricted to columns of $A$ in $J$. The
\textit{hereditary discrepancy of $A$} is defined as
\[ herdisc(A) = \max_{J\subseteq[n]}disc(A_{|J}) .\]
Hereditary discrepancy is often a better measure of the complexity of a set system than the discrepancy. It is also rather well behaved; while no polynomial time algorithm can distinguish between set systems with zero discrepancy and set systems with $O(\sqrt{n})$ discrepancy (assuming $P\neq NP$) \cite{CNN11}, hereditary discrepancy can be efficiently approximated. It was shown in \cite{MNT14} that for any matrix $A$, there exists an efficiently computable quantity $\gamma_2(A)$ such that
\begin{equation}
\label{eqn:herdisc}
 \Omega(\gamma_2(A)/\log m)\le herdisc(A) \le O(\gamma_2(A)\sqrt{\log m}) .
 \end{equation}

%$\gamma_2$ is a well known factorization norm on matrices and its value for any matrix can be computed by writing a semidefinite program.
The proof for the upper bound above was non-constructive in the sense that, given any $J\subseteq [n]$, it was not known how to efficiently find a coloring $x:J\rightarrow\{-1,1\}$ with discrepancy of $A_{|J}$ bounded by the right hand side of \eqref{eqn:herdisc}.
Using Theorem~\ref{thm:main}, we get an algorithm to find such a coloring.
\begin{corollary}
\label{thm:herdisc}
There exists an efficient randomized algorithm that, given any $m\times n$ matrix $A$ and $J\subseteq[n]$, returns a coloring $x:J\rightarrow\{-1,1\}$ such that with constant probability,
\[ \| A_{|J} x\|_{\infty} = O(\gamma_2(A)\sqrt{\log m})  .\]
\end{corollary}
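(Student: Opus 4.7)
The plan is to reduce to Theorem~\ref{thm:main} via an appropriate factorization of $A$. Recall that the $\gamma_2$ norm admits the characterization
\[
\gamma_2(A) = \min_{A = BC} \|B\|_{2\to\infty} \cdot \|C\|_{1\to 2},
\]
where $\|B\|_{2\to\infty}$ is the maximum $\ell_2$ norm of a row of $B$ and $\|C\|_{1\to 2}$ is the maximum $\ell_2$ norm of a column of $C$. A factorization achieving this quantity (up to a constant factor, which is all we need) can be computed in polynomial time via the SDP from \cite{MNT14}. I would rescale so that $\|B\|_{2\to\infty} \le 1$ and $\|C\|_{1\to 2} \le O(\gamma_2(A))$.

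Next, let $c_j$ denote the columns of $C_{|J}$; by construction $\|c_j\|_2 \le O(\gamma_2(A))$, so the rescaled vectors $\tilde c_j = c_j / O(\gamma_2(A))$ have $\ell_2$ norm at most $1$ and are valid input for the Gram-Schmidt walk. Apply Theorem~\ref{thm:main} with $x_0 = 0$ to obtain a coloring $x \in \{-1,1\}^{|J|}$ such that the random vector $Y := \sum_{j \in J} x(j)\, \tilde c_j$ is $\sigma$-subgaussian with $\sigma = \sqrt{40}$.

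The key computation is then to unwind the factorization. Writing $b_1,\ldots,b_m$ for the rows of $B$, we have $A_{|J} x = B\, C_{|J} x = O(\gamma_2(A)) \cdot B Y$, so
\[
\|A_{|J} x\|_\infty = O(\gamma_2(A)) \cdot \max_{i \in [m]} |\langle b_i, Y\rangle|.
\]
Since $Y$ is $\sigma$-subgaussian and $\|b_i\|_2 \le 1$, each scalar $\langle b_i, Y\rangle$ is $\sigma$-subgaussian in the usual one-dimensional sense and obeys the tail bound $\Pr[|\langle b_i, Y\rangle| > t] \le 2e^{-t^2/(2\sigma^2)}$. Taking a union bound over the $m$ rows and choosing $t = \Theta(\sigma\sqrt{\log m})$ shows that $\max_i |\langle b_i, Y\rangle| = O(\sqrt{\log m})$ with, say, probability at least $1/2$, delivering the desired discrepancy bound of $O(\gamma_2(A)\sqrt{\log m})$.

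There is no real obstacle: the main subtlety is ensuring that the constant-factor approximation of $\gamma_2(A)$ suffices, but since the claim already absorbs constants inside the $O(\cdot)$, this is fine. The other ingredients (scaling the columns of $C_{|J}$ to obtain unit $\ell_2$ norm so that Theorem~\ref{thm:main} applies, and converting subgaussianity into an $\ell_\infty$ bound via a union bound) are both routine. Thus the entire proof amounts to plugging the Gram-Schmidt walk into the factorization-based framework that was already implicit in the non-algorithmic argument of \cite{MNT14}.
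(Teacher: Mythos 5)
Your proof is correct and follows the same high-level approach as the paper: compute the $\gamma_2$-achieving factorization $A = BC$ from \cite{MNT14}, normalize so the columns of $C$ have $\ell_2$ norm at most one and the rows of $B$ are the things paying for $\gamma_2(A)$, and then run the Gram--Schmidt walk on those columns. The only real difference is in the last step: the paper packages the endgame as an application of Theorem~\ref{thm:bana}/Theorem~\ref{thm:subgaussian} by defining the symmetric convex body $K=\{y:\|By\|_\infty\le c\gamma_2(A)\sqrt{\log m}\}$, estimating $\gamma_m(K)\ge 1/2$, and then invoking Theorem~\ref{thm:subgaussian}(ii), whereas you work directly with the $\sigma$-subgaussianity of $Y$ from Theorem~\ref{thm:main} and apply the one-dimensional tail bound for each $\langle b_i,Y\rangle$ plus a union bound over the $m$ rows. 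These are the same calculation in two wrappers --- the Gaussian-measure estimate for $K$ is exactly the same union bound over Gaussian tails --- but your route is a bit more self-contained since it avoids passing through the convex-body equivalence theorem. Both yield the bound with constant probability, as required.
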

\paragraph{A Generalization of Banaszczyk's result.} We also give a generalization of Theorem~\ref{thm:bana}. Let
$\mathbb{B}_2^m$ denote the Euclidean ball in $\R^m$ of radius $1$ and centred at the origin.
\begin{theorem}
\label{thm:genbana}
Let $S_1,S_2,\dots,S_n$  be sets such that for each $i\in [n]$, $S_i \subseteq \mathbb{B}_2^m$ and $0$ lies in the convex hull of $S_i$. Then for any convex body $K$ with $\gamma_m(K)\ge 1/2$, there exist vectors $v_i\in S_i$ such that
$\sum_{i=1}^n v_i \in cK$,
where $c>0$ is an absolute constant. Moreover, there is an efficient algorithm to find these vectors.
\end{theorem}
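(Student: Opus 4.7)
The plan is to reduce the vector-selection statement to Theorem~\ref{thm:main} via Carath\'eodory's theorem followed by a polytope-constrained variant of the Gram--Schmidt walk. First I would reduce to finite sets: by Carath\'eodory, for each $i$ there exist vectors $u_{i,1},\dots,u_{i,k_i}\in S_i$ with $k_i\le m+1$ and positive weights $\lambda_{i,j}$ summing to $1$ such that $\sum_{j}\lambda_{i,j}u_{i,j}=0$. Replacing each $S_i$ by $\{u_{i,1},\dots,u_{i,k_i}\}$ does not weaken the conclusion, so I may assume $S_i$ is finite. The goal then becomes: find $y\in\{0,1\}^N$, where $N=\sum_i k_i$, satisfying $\sum_j y_{i,j}=1$ for every $i$ and $\sum_{i,j}y_{i,j}\,u_{i,j}\in cK$; the selection is then $v_i=u_{i,j(i)}$ with $j(i)$ the unique index with $y_{i,j(i)}=1$.

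I would next run a Gram--Schmidt-style walk over the polytope $P=\prod_i \Delta_{k_i}$ (a product of simplices), starting from the fractional point $y^{(0)}=(\lambda_{i,j})\in P$, which satisfies $\sum_{i,j}y^{(0)}_{i,j}\,u_{i,j}=0$. At each iteration I would pick a pivot coordinate that is still strictly inside $(0,1)$ and take a random $\pm\delta$ step along a direction built by Gram--Schmidt orthogonalization, constrained to lie in the tangent subspace $\{d\in\R^N:\sum_j d_{i,j}=0~\text{for every }i\}$ so that the simplex equalities are preserved throughout. The step size $\delta$ is chosen so that at least one more coordinate hits $0$ or $1$; after $O(N)$ iterations the walk therefore lands at a vertex of $P$, which yields a legal selection $v_i\in S_i$.

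Then I would carry over the subgaussian martingale analysis of Theorem~\ref{thm:main}. A step in the simplex tangent space decomposes as a signed combination of ``edge'' vectors $u_{i,j}-u_{i,j'}$ of $\ell_2$ norm at most $2$, so the walk increments for the partial sum $\sum_{i,j}y_{i,j}u_{i,j}$ are bounded analogously to the cube case, and the same Gram--Schmidt martingale bound should yield that the final $\sum_i v_i$ is $O(1)$-subgaussian. Combined with $\gamma_m(K)\ge 1/2$, Theorem~\ref{thm:subgaussian} then gives $\sum_i v_i\in cK$ with constant probability, and the entire procedure runs in polynomial time.

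The main obstacle I anticipate is making the martingale analysis go through cleanly in the presence of the equality constraints: one must verify that the Gram--Schmidt construction restricted to the tangent subspace does not degenerate on faces of $P$ and that the covariance of each step still admits the ``identity-minus-projection'' structure underlying the proof of Theorem~\ref{thm:main}. A cleaner black-box workaround, if the direct analysis is awkward, is to simulate the simplex walk as a sequence of pair-swaps along edges of $\mathrm{conv}(S_i)$ and feed the rescaled difference vectors $(u_{i,j}-u_{i,j'})/2$, which have $\ell_2$ norm at most $1$, directly into Theorem~\ref{thm:main}.
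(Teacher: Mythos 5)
Your first (main) line of attack --- running a polytope-constrained Gram--Schmidt walk over $\prod_i \Delta_{k_i}$ --- is not what the paper does, and it has a genuine gap you yourself flag: you never establish that the subgaussian martingale analysis of Theorem~\ref{thm:main} survives the equality constraints. This is not a minor verification. The proof of Theorem~\ref{thm:main} leans heavily on the hypercube structure: a single ``pivot'' coordinate per phase, a nested chain of subspaces $V_t$ spanned by alive vectors, a potential $\Phi_t$ that couples the residual color $1-x_t(n(t))^2$ of that one pivot to the projection of $\theta$, and the fact that freezing a coordinate at $\pm 1$ removes exactly one column. On a product of simplices, a coordinate hitting $0$ does not end the block, the tangent space forces each step to be a signed combination of edge directions $u_{i,j}-u_{i,j'}$ rather than a single coordinate, and the covariance of an increment will generally not have the rank-one ``identity-minus-projection'' shape the phase/potential argument is built around. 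Absent a full redo of Sections~3.4--3.6 for this constrained walk, the subgaussianity claim for $\sum_{i,j}y_{i,j}u_{i,j}$ is asserted, not proved.

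Your closing ``black-box workaround'' points in the right direction but is still missing the key idea. The paper (following the \cite{LSV} argument that linear discrepancy is at most twice hereditary discrepancy) does \emph{not} run one walk that lands on a vertex of $P$; it does a \emph{bit-by-bit dyadic rounding}. After Carath\'eodory (where you and the paper agree), truncate each weight $x_{i,j}$ to $k=\log(2mn/\epsilon)$ bits (this is where the slack $\gamma_m(K)\geq 1/2+\epsilon$ is used, to absorb the truncation error into $K$). Then for $\ell=k,k-1,\dots,1$, look at the coordinates whose $\ell$-th bit is $1$; since $\sum_j x^{(\ell)}_{i,j}=1$, there is an even number of them in each block $i$, so pair them up, form the difference vectors $w_{i,j}=(u_{i,2j-1}-u_{i,2j})/2$ of norm $\leq 1$, and apply Theorem~\ref{thm:bana} to get a $\pm 1$ coloring landing in $cK$. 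This simultaneously zeroes the $\ell$-th bit, preserves the per-block sum $\sum_j x_{i,j}=1$, and contributes discrepancy at most $2\cdot 2^{-\ell}cK$. Summing the geometric series over $\ell$ gives a total in $O(1)\cdot K$. This geometric decay across bit levels, together with the pairing trick that keeps the simplex equalities exact after every round, is the actual content of the reduction, and it is what your proposal leaves unspecified. (The paper also notes the endpoint case $\gamma_m(K)=1/2$ needs the recentering trick from \cite{DGLN16}, which neither your polytope walk nor your workaround addresses.)
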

Theorem~\ref{thm:bana} is the special case of the above theorem, obtained by taking $S_i=\{-v_i,v_i\}$ for each $i\in [n]$. We will constructively reduce Theorem~\ref{thm:genbana} to Theorem~\ref{thm:bana}, which implies that an algorithm for the latter also gives an algorithm for the former. Similar generalizations for several other problems in discrepancy are mentioned in \cite{B08}.
%\slnote{Looking at the survey of Baranyi again, he mentions this type of generalization for other problems in discrepancy. We should mention that}

\subsection{Organization of the Paper}
We state the algorithm for Theorem~\ref{thm:main} in Section~\ref{sec:algorithm}. The analysis is in Section~\ref{sec:analysis}; with a  sketch of the main ideas in Section~\ref{sec:mainideas}. The applications are discussed in Section~\ref{sec:applications}.

\subsection{Acknowledgments}

The authors would like to thank Aleksandar Nikolov %\ddnote{anyone else: Thomas /Mohit?}
for innumerable discussions related to the topic of this paper. N.B. and S.L. would like to thank
the Dagstuhl seminar ``Computational Complexity of Discrete Problems" where this work was initiated.

\section{Algorithm Description: The Gram-Schmidt Walk}
\label{sec:algorithm}
%\slnote{maybe move as part of the introduction? then introduce theorem~\ref{thm:main}}

The algorithm will proceed in time steps $t=1,2,\ldots,n$, and will maintain a fractional coloring $x_t\in[-1,1]^n$ at all time steps. Let $x_0$ be an arbitrary initial fractional coloring.
%\slnote{we did not discuss an arbitrary $x_0$ in the introduction or in theorem~\ref{thm:main} ; do we really need it?}
%\nbnote{Allows us to assume that $v_i$ are linearly independent, which makes things a bit cleaner as we do not need to worry about 0 vectors in Gram-Schmidt. Also the theorem statement is a bit more general like this.}
%(we do not necessarily require that $x_0=\{0\}^n$
Let $x_t$ denote the coloring at the end of time step $t$. An element $i$ is called \textit{alive} at time $t$ if $|x_{t-1}(i)|<1$ and \textit{frozen} (or fixed) otherwise. Let $A_t \subseteq [n]$  denote the set of alive elements at time $t$.

%\sgnote{removed usage of beginning of time t and end of time t from para below in defn of n(t)}

We now give some notation to describe the update step at each time $t$.  Let
$n(t) \in A_t$ denote largest indexed element that is alive at time $t$. We call
$n(t)$ the \textit{pivot} element at time $t$.  Let $V_{t}$ denote the subspace
spanned by the vectors $\{v_i: i\in A_t, i\neq n(t)\}$ and let
 %\slnote{$S$ is often used to denote sets; maybe call a subspace $V_t$?} where
 %we define $S_0=span\{v_i=1,\dots,n\}$.
$v^{\perp}(t) := \Pi_{V_t^\perp} v_{n(t)}$ denote the projection of $v_{n(t)}$
orthogonal to $V_{t}$.  Note that $v^{\perp}(t)$ depends on $t$ and not just on
$v_{n(t)}$; this is because in our algorithm, $A_t$ (and hence $V_t$) may change
between time steps even if $n(t)$ remains the same. From the perspective of the
walk, $v^{\perp}(t)$ will correspond to the direction that the output of the
random walk will move in during timestep $t$.  To justify the name of the walk,
note that $v^\perp(t)$ is simply the last vector of the Gram-Schmidt
orthogonalization of the ordered sequence of vectors $(v_i)_{i \in A_t}$.

%\ddnote{Tried to justify name of walk}

%For any vector $w$, let $\widehat{w}$ denote the unit vector $\frac{w}{\|w\|_2}$
%if $w \neq 0$ and $0$ otherwise. Since our algorithm can start with any initial coloring $x_0$, we will assume wlog that the vectors $v_1,\ldots,v_n$ are linearly independent;
%we can apply the standard preprocessing step  to remove linear dependencies by finding some dependency among alive variables (if it exists) and updating the variables until some variable gets fixed. These steps incur zero discrepancy for every row.
%In fact as we will see our algorithm below can be viewed as a randomized extension of the above dependent rounding approach.

We now describe the algorithm formally. 

\begin{tcolorbox}
\textbf{Algorithm description:}\\
Input: vectors $v_1,\ldots,v_n \in \R^m$ of $\ell_2$ norm at most one; an initial coloring $x_0 \in [-1,1]^n$.\\
Output: coloring $x \in \{-1,1\}^n$.
\begin{enumerate}
\item Given the initial coloring $x_0$, initialize $A_1 = \set{i \in [n]: |x_0(i)| < 1}$, $n(1) = \max \set{i \in A_1}$ and $t=1$.
\item While $A_t \neq \emptyset$, do the following:
\begin{enumerate}
\item Compute an update direction $u_t=(u_t(1),\ldots,u_t(n)) \in \mathbb{R}^n$ as follows:
\begin{itemize}
\item if $i \notin A_t$ set $u_t(i)=0$.
\item if $i=n(t)$ set $u_t(i)=1$.
\item for $i \in A_t \setminus \{n(t)\}$ set $u_t(i)$ to satisfy $v^\perp(t)=v_{n(t)}+\sum_{i \in A_t \setminus \{n(t)\}}  u_t(i) v_i$.
\end{itemize}
\item Let $\delta^{-}_t <0< \delta^{+}_t$ be the unique negative and positive
solutions for $\delta$, respectively, for the equation 
$\max_{i \in A_t} |x_{t-1}(i) + \delta u_t(i)| = 1$.
%$\|(x_{t-1}+\delta u_t)_{A_t}\|_\infty=1$ \ddnote{Need to restrict to coordinates in $A_t$}. 
Update the coloring $x_{t-1}$ randomly as
\[ x_t = x_{t-1} + \delta_t u_t \]
where $\delta_t \in \{\delta^{-}_t,\delta^{+}_t\}$ is chosen randomly as
\[
\delta_t =
\begin{cases}
\delta^{-}_t & \text{with probability } \frac{ \delta^{+}_t}{ \delta^{+}_t -  \delta^{-}_t}\\
\delta^{+}_t & \text{with probability } \frac{ -\delta^{-}_t}{ \delta^{+}_t -  \delta^{-}_t}
\end{cases}\;.
\]
\item Update $A_{t+1} = \{i \in [n]: |x_t(i)| < 1\}$, $n(t+1) = \max \{i \in
A_{t+1}\}$ and $t \leftarrow t+1$.
\end{enumerate}
\item Output $x_t$.
\end{enumerate}
\end{tcolorbox}

%It should be noted in step $(2.a)$ that $v_{n_t}-v^{\perp}(t)$ lies in the subspace $S_t$ i.e. the span of $\{v_i: i\in A_t, i\neq n_t\}$ and hence there always exist some $\alpha_i$'s such that
%\[ v_{n_t}-v^{\perp}(t) = \sum_{i\in A_t, i\neq n_t} \alpha_i v_i .\]

\section{Algorithm Analysis}
\label{sec:analysis}

We proceed now to the analysis of our algorithm. In the first subsection we develop some preliminaries which will be helpful later.

\subsection{Preliminaries}
\label{sec:prel}

%\slnote{maybe move this section to after we describe the algorithm?}
%\nbnote{The $\theta$ here may be confusing with $\theta \in \R^m$ earlier. Perhaps change to $\lambda$ or something.}
%\nbnote{Also change $Y_t$ to $Z_t$ since we apply theorem later to $Z_t$'s and $Y_t$ refers to a different process.}

To bound the discrepancy, we will use a concentration inequality which is a variant of Freedman's inequality for martingales~\cite{Freedman}. The following lemma will be useful.

\begin{lemma}
\label{lem:prel1}
Let $X$ be a random variable such that $X\le 1$. Then for any $\lambda>0$,
\[ \E[e^{\lambda X}] \le \exp\left(\lambda \E[X] +  (e^{\lambda} -\lambda-1 )\E[X^2]\right).\]
\end{lemma}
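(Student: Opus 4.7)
The plan is to reduce the exponential moment bound to a pointwise inequality, which is the standard route for Bennett/Freedman-type estimates.

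Specifically, I would first prove the deterministic claim that for every $\lambda>0$ and every real number $x\le 1$,
\[ e^{\lambda x} \;\le\; 1 + \lambda x + (e^{\lambda}-\lambda-1)\,x^2. \]
Given this, the lemma follows in two short steps. Taking expectations on both sides gives
\[ \E[e^{\lambda X}] \;\le\; 1 + \lambda\,\E[X] + (e^{\lambda}-\lambda-1)\,\E[X^2], \]
and then the elementary inequality $1+y\le e^{y}$ applied to $y=\lambda\E[X]+(e^{\lambda}-\lambda-1)\E[X^2]$ yields exactly the claimed bound.

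So the real work is the pointwise inequality. The cleanest approach I know is to substitute $y=\lambda x$ (so the hypothesis $x\le 1$ becomes $y\le \lambda$) and show that the auxiliary function
\[ g(y)\;:=\;\frac{e^{y}-1-y}{y^{2}} \quad (y\neq 0), \qquad g(0):=\tfrac{1}{2}, \]
is monotone non-decreasing on all of $\R$. Once this is established, for any $y\le\lambda$ one has $g(y)\le g(\lambda)=(e^{\lambda}-\lambda-1)/\lambda^{2}$, which after clearing denominators and substituting back $y=\lambda x$ gives precisely the claimed pointwise bound.

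To prove the monotonicity of $g$, I would compute the derivative and reduce to showing that the single-variable function
\[ h(y)\;=\;(y-2)e^{y}+y+2 \]
satisfies $h(y)/y^{3}\ge 0$ for all $y\neq 0$, since a direct calculation gives $g'(y)=h(y)/y^{3}$. The sign of $h$ can then be pinned down by two rounds of differentiation: $h''(y)=y e^{y}$ has the sign of $y$, so $h'$ attains its minimum at $y=0$ with $h'(0)=0$, hence $h'\ge 0$ everywhere; combined with $h(0)=0$ this forces $h(y)\ge 0$ for $y\ge 0$ and $h(y)\le 0$ for $y\le 0$, making $h(y)/y^{3}\ge 0$ in both regimes. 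This verifies $g'\ge 0$ and completes the argument.

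The only mildly delicate point is the monotonicity of $g$ across the singularity at $0$; a tempting shortcut via the power series $g(y)=\sum_{k\ge 0}y^{k}/(k+2)!$ does not immediately show monotonicity for negative $y$, which is why I would go through the explicit derivative calculation above. Everything else is routine.
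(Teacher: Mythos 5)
Your proof is correct and follows essentially the same route as the paper's: reduce to the pointwise bound $e^{\lambda x} \le 1 + \lambda x + (e^{\lambda}-\lambda-1)x^2$ for $x\le 1$ via monotonicity of the ratio $(e^{\lambda x}-\lambda x-1)/x^2$, then take expectations and apply $1+y\le e^y$. The only difference is that the paper merely asserts the monotonicity (``it can be verified''), whereas you actually supply the verification by computing $g'(y)=h(y)/y^3$ with $h(y)=(y-2)e^y+y+2$ and pinning down the sign of $h$ by two differentiations -- a correct and welcome completion of the step the paper leaves implicit.
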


\begin{proof}
Let $f(x) = \frac{e^{\lambda x} - \lambda x - 1}{x^2}$ where we set $f(0)=\lambda^2/2$. It can be verified that $f(x)$ is increasing for all $x$.
This implies $e^{\lambda x} \leq f(1)x^2 + 1 + \lambda x$ for any $x \le 1$.
Taking expectation, this becomes
%\begin{eqnarray*}
\[ \E[e^{\lambda X}] \leq  1+\E[\lambda X]+f(1)\E [X^2]
 =  1+\lambda\E[ X]+(e^{\lambda} -\lambda-1 )\E [X^2]
 \le   e^{\lambda \E [X] +(e^{ \lambda} -\lambda-1 )\E [X^2]} \]
%\end{eqnarray*}
where the last inequality uses the fact that $1+x\le e^x$.
\end{proof}

%\ddnote{I guess we should state this with respect to a stopping time? We can also define
%the walk to run until time $n$.}

%\slnote{where is this lemma from? we should probably cite/acknowledge somebody}

We will use the following concentration inequality to bound the discrepancy. This is a slight modification of Freedman's inequality given by Yin-Tat Lee and we show its proof below.

\begin{lemma}
\label{lem:prel2}
Let $X_1,\ldots,X_n$ and $Z_0,Z_1,\ldots,Z_n$ be random variables that satisfy
\begin{enumerate}
\item $Z_0$ is deterministic,
\item $Z_{t}-Z_{t-1}\le X_t$ for all $t=1,\ldots,n$ with probability one,
\item $X_t\leq 1$ for all $t=1,\ldots,n$ with probability one, and
\item $\E[X_t+X_t^2|Z_1,\ldots,Z_{t-1}] \le 0$ for all $t=1,\ldots,n$ with probability one.
\end{enumerate}
Then
\[
\E[e^{Z_n}] \leq e^{ Z_0}.
\]
\end{lemma}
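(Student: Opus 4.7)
The plan is to prove the lemma by showing that $M_t := e^{Z_t}$ is a supermartingale with respect to the filtration $\mathcal{F}_t$ generated by $Z_1,\ldots,Z_t$. Once this is established, iterated application of the tower property yields
\[
\E[e^{Z_n}] = \E\bigl[\E[M_n \mid \mathcal{F}_{n-1}]\bigr] \le \E[M_{n-1}] \le \cdots \le M_0 = e^{Z_0},
\]
which is exactly the desired conclusion.

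To verify the supermartingale property, I would first use condition (2) to peel off one step: since $Z_t \le Z_{t-1} + X_t$ almost surely and $Z_{t-1}$ is $\mathcal{F}_{t-1}$-measurable,
\[
\E[e^{Z_t} \mid \mathcal{F}_{t-1}] \le e^{Z_{t-1}} \cdot \E[e^{X_t} \mid \mathcal{F}_{t-1}].
\]
By condition (3), $X_t \le 1$ almost surely, so I would apply Lemma~\ref{lem:prel1} with $\lambda=1$ to the conditional distribution of $X_t$ given $\mathcal{F}_{t-1}$, obtaining
\[
\E[e^{X_t} \mid \mathcal{F}_{t-1}] \le \exp\!\Bigl(\E[X_t \mid \mathcal{F}_{t-1}] + (e-2)\,\E[X_t^2 \mid \mathcal{F}_{t-1}]\Bigr).
\]

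The remaining step is to show the exponent on the right-hand side is nonpositive, and this is where I expect the only subtlety to lie: one must notice that the constant $e-2 \approx 0.718$ produced by Lemma~\ref{lem:prel1} is strictly less than $1$, while hypothesis (4) gives the bound with coefficient $1$. Specifically, since $\E[X_t^2\mid \mathcal{F}_{t-1}] \ge 0$ and $\E[X_t \mid \mathcal{F}_{t-1}] \le -\E[X_t^2 \mid \mathcal{F}_{t-1}]$ by (4), one has
\[
\E[X_t \mid \mathcal{F}_{t-1}] + (e-2)\,\E[X_t^2 \mid \mathcal{F}_{t-1}] \le (e-3)\,\E[X_t^2 \mid \mathcal{F}_{t-1}] \le 0,
\]
using $e-3 < 0$. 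Hence $\E[e^{X_t} \mid \mathcal{F}_{t-1}] \le 1$, which gives $\E[M_t \mid \mathcal{F}_{t-1}] \le M_{t-1}$, the supermartingale property. Combined with the telescoping argument from the first paragraph, this completes the proof.
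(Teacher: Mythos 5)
Your proof is correct, and it is in fact a simplification of the paper's argument. The paper works with a free parameter $\lambda>0$, bounds the exponent by $(e^{\lambda}-2\lambda-1)\E_{t-1}[X_t^2]$, and then chooses $\lambda$ to be the unique positive root of $e^{\lambda}-2\lambda-1=0$; since this root lies strictly above $1$, they must finish with an extra application of Jensen's inequality to pass from $\E[e^{\lambda Z_n}]\le e^{\lambda Z_0}$ down to $\E[e^{Z_n}]\le e^{Z_0}$. You instead set $\lambda=1$ at the outset and observe that the resulting coefficient $e-3$ is already nonpositive, so the exponent $\E_{t-1}[X_t]+(e-2)\E_{t-1}[X_t^2]\le (e-3)\E_{t-1}[X_t^2]\le 0$ immediately gives $\E[e^{X_t}\mid\mathcal{F}_{t-1}]\le 1$, the supermartingale bound, and the conclusion follows by the tower property. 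This bypasses both the root-finding step and the Jensen step. The two approaches rest on the same core tools (Lemma~\ref{lem:prel1} and condition~(4)); yours simply notices that the parameter value the final answer needs, $\lambda=1$, already closes the argument.
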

\begin{proof}
%Let $\theta>0$ be a real number to be determined later. Then using Markov's inequality,
% \begin{equation}
% \label{markov}
%  \Pr[Y_T-Y_0\ge \lambda] \le \Pr[e^{\theta (Y_T-Y_0)}\ge e^{\theta \lambda}] \le \frac{\E[e^{\theta (Y_T-Y_0)}]}{e^{\theta\lambda}}
%  \end{equation}
Let $\lambda>0$ be a real number to be determined later. We shorthand as $\E_{t-1}[\cdot]$ the conditional expectation $\E[\cdot | Z_1,\dots, Z_{t-1}]$.
We first bound $\E_{t-1}[e^{\lambda Z_t}]$ for which we observe the following:
\begin{eqnarray*}
\E_{t-1}\left[e^{\lambda Z_t}\right] &= & e^{\lambda Z_{t-1}}\E_{t-1}\left[e^{\lambda(Z_t-Z_{t-1})}\right] \\
&\le & e^{\lambda Z_{t-1}}\E_{t-1}\left[e^{\lambda X_t}\right] \\
& \le &  e^{\lambda Z_{t-1}} \exp\left(\lambda \E_{t-1}[X_t] +(e^{ \lambda} -\lambda-1 )\E_{t-1}\left[X_t^2\right]\right) \qquad \textrm{(using Lemma~\ref{lem:prel1})} \\
& \le & e^{\lambda Z_{t-1}}\exp\left((e^{ \lambda} -2\lambda-1 )\E_{t-1}\left[X_t^2\right]\right) \qquad \textrm{(using }\E_{t-1}[X_t] \le -\E_{t-1}[X_{t}^2] )\\
\end{eqnarray*}
Set $\lambda>0$ to be the (unique) solution of $e^{ \lambda} -2\lambda-1 =0.$
As $e^x \leq 1 + x + x^2$ for $x \leq 1$, for $0<\lambda<1$ we have
\[ e^{\lambda} -2\lambda-1 \le\lambda^2 -\lambda <0. \]
Thus, it must hold that $ \lambda \geq 1 $. Then we get $\E_{t-1}[e^{\lambda Z_t}] \le e^{\lambda Z_{t-1}} $. And thus by
induction,  $\E[e^{\lambda Z_n}] \le \E[e^{\lambda Z_0}] = e^{\lambda Z_0}$, since
$Z_0$ is deteministic. Since $\lambda \geq 1$, by Jensen's inequality
$\E[e^{ Z_n}] \leq \E[e^{\lambda Z_n}]^{1/\lambda} \leq e^{ Z_0}$,
as needed.
% Putting this in (\ref{markov}), we get
% \[\Pr[Y_T-Y_0\ge \lambda] \le e^{-\theta\lambda}\le e^{-\alpha\lambda}. \]
\end{proof}

\subsection{Notations and Preliminary Observations}
We can now start with the analysis of the algorithm. Since at the end of each
time step $t$, at least one more element gets colored $-1$ or $1$, the algorithm
clearly terminates in at most $n$ steps. To simplify notations, if the algorithm
terminates after $t<n$ steps and outputs $x_t$, we set $x_{t+1}=\ldots=x_n := x_t$.
As we maintain $\| x_t\|_{\infty} \leq 1$ for all time steps $t$, we see that $x_n\in \{-1,1\}^n$.

%Let $T$ denote the number of steps of the algorithm (note that $T$ is a random
%variable). Since at the end of each time step $t \in [T]$ at least one more
%element gets colored with $-1$ or $1$, we clearly have $T \leq n$. Furthermore, since we maintain $\|x_t\|_\infty \leq 1$ for all $ t \in [T]$, we see that $x_T \in
%\set{-1,1}^n$.  The algorithm thus clearly terminates in polynomial time.

It should also be noted in step $(2.a)$ of the algorithm that such a direction $u_t$ always exists. This is because $v_{n(t)}-v^{\perp}(t)$ lies in the subspace $V_t$ and hence there always exist $u_t(i)$'s such that
\[ v_{n(t)}-v^{\perp}(t) = -\sum_{i\in A_t \setminus \{n(t)\}} u_t(i) v_i .\]
This then gives $v^{\perp}(t) = v_{n(t)} + \sum_{i\in A_t \setminus \{n(t)\}} u_t(i) v_i = \sum_{i \in A_t}  u_t(i) v_i$. As $u_t(i)=0$ if $i \notin A_t$, we obtain that
\begin{equation}\label{eq:disc_update_t}
v^{\perp}(t)=\sum_{i=1}^n u_t(i) v_i.
\end{equation}

%\begin{lemma}
%\label{lem:delta}
%For all time steps $t$, $|\delta_t| \le 2$.
%\end{lemma}
%\begin{proof}
%When we update the coloring, the color of the pivot $n_t$ is updated by $\delta_tu_t(n_t)=\delta_t$. Because we maintain $\|x_t\|_\infty\le 1$ at all time steps, it must be the case that $|\delta_t| \le 2$.
%\end{proof}
We now focus on showing the subgaussianity bound in Theorem~\ref{thm:main}.
Henceforth, we fix a vector $\theta \in \R^m$ with respect to which we want to show
subgaussianity. Let $A$ be the $m \times n$ matrix whose columns are given by
$v_1,\ldots,v_n$. Define $Y := \sum_{i=1}^n (x(i)-x_0(i))v_i$ and let
\[disc(\theta)=\langle \theta,Y \rangle =   \langle \theta,\sum_{i=1}^n (x(i)-x_0(i)) v_i \rangle = \sum_{i=1}^n (x(i)-x_0(i)) \langle \theta,v_i \rangle \]
 and hence $\langle \theta,Y \rangle $ can be seen as the discrepancy of the ``row" $\theta^TA$, which has $\langle \theta,v_i \rangle$ as its $i^{th}$ entry.

Let us denote the respective signed discrepancy
at the end of time step $t$ by
\[
disc_t := \langle \theta, \sum_{i=1}^n  (x_t(i)-x_0(i)) v_i\rangle =
\sum_{i=1}^n  (x_t(i)-x_0(i))\langle \theta,v_i  \rangle,
\]
let
\[
\Delta_t x := x_t - x_{t-1} = \delta_t u_t
\]
denote the coloring update at time $t$, and let
\[
\Delta_t disc := disc_t - disc_{t-1} = \sum_{i=1}^n \langle \theta,v_i  \rangle \Delta_t x(i)
= \delta_t \sum_{i=1}^n \langle \theta, v_i \rangle u_t(i)
\]
be the change in discrepancy at time $t$. Our end goal
is to show that $\E[e^{\disc_n}] \leq e^{(\sigma^2/2) \|\theta\|_2^2}$, for $\sigma
= O(1)$ to be computed later. 

A key observation is that the change in discrepancy at time $t$ depends only on
the vector $v^{\perp}(t)$.

%at at each time step $t$, the discrepancy incurred only depends only on the component of the pivot orthogonal to other alive elements.

\begin{lemma}
\label{lem:changedisc} At each time step $t$, $\Delta_t disc = \delta_t \langle \theta, v^{\perp}(t)\rangle$.
\end{lemma}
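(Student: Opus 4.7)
The plan is to observe that this lemma is essentially an immediate consequence of the key identity (\ref{eq:disc_update_t}) established just above, namely
\[
v^\perp(t) = \sum_{i=1}^n u_t(i)\, v_i,
\]
together with the definition of $\Delta_t \disc$ already spelled out in the preliminary observations.

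First, I would start from the displayed expression
\[
\Delta_t \disc = \delta_t \sum_{i=1}^n \langle \theta, v_i \rangle\, u_t(i)
\]
which was derived from $\Delta_t x = \delta_t u_t$ and the linearity of the inner product. Then I would factor the sum inside the inner product: by linearity of $\langle \theta, \cdot\rangle$,
\[
\sum_{i=1}^n \langle \theta, v_i \rangle\, u_t(i) = \Bigl\langle \theta,\, \sum_{i=1}^n u_t(i)\, v_i\Bigr\rangle.
\]
Finally, applying (\ref{eq:disc_update_t}) to rewrite the inner sum as $v^\perp(t)$ gives $\Delta_t \disc = \delta_t \langle \theta, v^\perp(t)\rangle$, as claimed.

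There is no genuine obstacle here; the content of the lemma is really just bookkeeping, and the only ``work'' is the recognition that the update direction $u_t$ was constructed precisely so that $\sum_i u_t(i) v_i$ equals the Gram--Schmidt residual $v^\perp(t)$. This is exactly what makes the walk analysis tractable later: a single vector $v^\perp(t)$, rather than the whole update $u_t \in \R^n$, controls the one-step change in the directional discrepancy, and since $\|v^\perp(t)\| \le \|v_{n(t)}\| \le 1$, the increments along $\theta$ will be well-behaved.
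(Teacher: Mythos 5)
Your proof is correct and is essentially identical to the paper's: both begin from $\Delta_t \disc = \delta_t \sum_i \langle\theta,v_i\rangle u_t(i)$, move the sum inside the inner product by linearity, and invoke the identity $v^\perp(t)=\sum_i u_t(i)v_i$ from \eqref{eq:disc_update_t}. Nothing further to add.
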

\begin{proof}
The change in discrepancy at time $t$ is
\[\Delta_t \disc = \delta_t \sum_{i=1}^n \langle \theta,v_i \rangle u_t(i) =
\delta_t \langle \theta,  \sum_{i=1}^n u_t(i)v_i \rangle = \delta_t \langle
\theta, v^\perp(t) \rangle,
\]
where the last equality follows by~\eqref{eq:disc_update_t}.
\end{proof}

%\sgnote{too much notation below. }

%\ddnote{Will fix notation here later}

\subsection{Main ideas}
\label{sec:mainideas}

Before we give the technical details, we describe the main ideas of the analysis, which are simple, and then give a roadmap of the analysis.

%\ddnote{Since we don't make this assumption in the actual algorithm or its analysis, I would
%just say for that we shall assume linear independence for simplicity of the discussion.}
First, as our algorithm can start with any initial coloring $x_0$, we may assume without loss of generality that the vectors $v_1,\ldots,v_n$ are linearly independent.
Otherwise, we can apply a standard preprocessing step that removes linear dependencies, by finding some linear dependency among the vectors and
making an update step which incurs zero discrepancy. 
%\ddnote{Bit too vague for my
%taste, and we also don't elaborate it on it below. I would either remove it or
%explain that the Gram-Schmidt walk essentially implements this preprocessing
%automatically without any additional work.}
 Our algorithm can be viewed as a randomized extension of the above dependent rounding approach.

Suppose for a moment that the element that got colored at each time step was the pivot. That is, the elements got colored in the order $n,n-1,\ldots,1$. Then, at time $t$, the pivot is $n(t) = n-t+1$ and the vectors $v^{\perp}(t)$
can be described as follows. Let $w_1,\ldots,w_n$ be the orthonormal vectors
obtained by applying the Gram-Schmidt orthonormalization procedure (GS) on the vectors $v_1,\ldots,v_n$ in that order. That is, $w_1=v_1/\|v_1\|$ and for $i>1$, $w_i$ is the projection
of $v_i$ orthogonal to $v_1,\ldots,v_{i-1}$, normalized to have unit norm.
By our assumption that $v_i$ are linearly
independent, each $w_i$ is non-zero. It is easily checked that $v^{\perp}(t) =
 \langle v_{n(t)},w_{n(t)} \rangle w_{n(t)}$.  Another observation that will be
useful later is that $w_{n(t)}$ (and hence $v^{\perp}(t)$) depends only on the
set $\{v_1,\ldots,v_{n(t)-1}\}$ and not the particular order in which GS is
applied to this set.

Now $\delta_t$ is a mean-zero random variable which is independently chosen at each time $t$. Also $|\delta_t|\leq 2$ for all $t$ (see Lemma~\ref{lem:sumofdelta}). This suggests that the moment generating function of the discrepancy is
\[
\E\left[e^{disc(\theta)}\right]  = \E\left[e^{ \sum_{t=1}^n  \delta_t \langle  \theta, v^{\perp}(t)\rangle  }\right]
\le  e^{O(1) \cdot \sum_{t=1}^n \langle \theta, v^{\perp}(t) \rangle^2} . \]
But this is at most $ e^{O(1) \cdot \| \theta\|_2^2}$ as
\[ \sum_t \langle \theta, v^{\perp}(t) \rangle^2  = \sum_t \langle  \theta,  \langle v_{n(t)},w_{n(t)} \rangle w_{n(t)} \rangle^2 \leq  \sum_t \langle  \theta, w_{n(t)} \rangle^2 \leq \| \theta\|_2^2 .\]
Here the first inequality follows as $|\langle v_{n(t)},w_{n(t)} \rangle|\leq 1$ as the $w_i$'s are orthonormal unit vectors and $\|v_i\|_2 \leq 1$ for each $i$.
The second inequality follows as $\sum_i \langle  \theta, w_{i} \rangle^2 \leq \|\theta\|^2_2$ for any orthonormal collection of vectors $w_i$.

There are two issues that need to be addressed in the simplified description above.
First, non-pivot elements may get colored sometimes. That is, the variables may not get colored in the order $n,n-1,\ldots,1$. The key point of the analysis is to show that this only makes the problem easier. To make this a bit more precise, let us view
$\sum_i \langle  \theta, w_{i} \rangle^2 \leq \|\theta\|^2_2$  as the energy budget initially available to us. If some non-pivot element $x_k$ is colored at some time $t$, then the GS procedure (without $v_k$) will produce a different set of orthonormal vectors $\{w'_i\}$. However we can bound the increase
\[\langle \theta,w'_{n(t)} \rangle^2 - \langle \theta,w_{n(t)} \rangle^2 \]
 in the pivot's energy by the amount  $\langle  \theta, w_{k} \rangle^2$ which
was available to us, but we will never use it anymore as $k$ will never be a pivot once it is colored.
The formal analysis later on will divide the time steps into \textit{phases} where the pivot element remains the same during a phase and consider the evolution of $v^{\perp}(t)$ in each phase.

A second technical issue is that to obtain a subgaussian distribution for all $\theta \in \mathbb{R}^m$, we need to control the variance of the energy which requires that $|\langle \theta, v^{\perp}(t \rangle)|=O(1)$. Let us refer to the times when this does not happen as bad times. To get around this issue we break the discrepancy contribution into two parts: a deterministic part due to the bad times, and a random contribution due to the other time steps. Again by considering the dynamics of how $v^{\perp}(t)$ evolves during a phase, one can show that the cumulative discrepancy due to all the bad times can be at most $O(\| \theta\|_2^2)$.

%A second technical issue is that to obtain subgaussian contribution at scale $\lambda$, we need to control the variance of the energy which requires that $\|v^{\perp}(t)\|_2^4 \leq  (1/\lambda^2) \|v^{\perp}(t)\|_2^2$, and hence $\|v^{\perp}(t)\|_2 \leq (1/\lambda)$. Let us call these as bad pivots. To get around this issue we break the discrepancy contribution into two parts: an $\ell_1$ part due to the bad pivots, and a martingale contribution due to the other pivots. Again by considering the dynamics of how $v^{\perp}(t)$ evolves during a phase, one can show that the cumulative $\ell_1$ contribution due to all the bad pivots can be at most $O(\lambda)$.

\subsection{Phases and Dynamics of Pivots}
To analyze the process, we need to set up some more notation. Let $T$ be the first time at the end of which all the elements are colored. Recall that at the
beginning of each time step $t \in [T]$, there is some pivot element $n(t)$ and
some set of alive elements $A_t$. Recall that $V_t = {\rm span}\set{v_i:i\in A_t, i\neq n(t)}$. After the update at time $t$, some element is frozen. By
convention, we define $V_0 := {\rm span}\set{v_1,\ldots,v_n}$. Note that for any time $t \in
[T]$, we have $V_{t} \subseteq V_{t-1}$.% with a strict inclusion.

%\sgnote{Removed that there is strict inclusion in the above equation. We are not assuming vectors are linearly indep., so the strict inclusion is not true}

We divide the time steps $t \in [T]$ into $\kappa \in [T]$ disjoint phases,
where a phase is a maximal sequence of time steps with the same pivot element
(note that $\kappa$ is a random variable). Let $t^b_k$ be the time when phase $k
\in [\kappa]$ begins and $t^e_k$ be the time when this phase ends. Note that
$n(t)=n(t^b_k)$ for all $t \in [t^b_k,t^e_k]$ and the pivot
element $n(t^b_k)$ gets frozen after the update at time step $t^e_k$. Thus, the first
phase, with $n(1)$ as pivot, begins at the beginning of time step $t^b_1=1$ and ends at the end of time step $t^e_1=t^b_2-1$.
Given a time $t \in [T]$, we define $f_t = \min \set{t' \in [t]: n(t') = n(t)}$ to
be the beginning time of the phase that $t$ belongs to. Note that if $t \in
[t^b_k,t^e_k]$ then $f_t = t^b_k$.

We will now give a useful characterization of the discrepancy vector
$v^\perp(t)$ at any time step $t \in [T]$. For any subspace $W \subseteq \R^m$, let $W^{\perp}$ denote the subspace orthogonal to $W$ and
 let $\Pi_{W}(\cdot)$ denote the projection operator on $W$. For two linear subspaces $W_1
\supseteq W_2$, we use the notation $W_1 / W_2 := W_1 \cap W_2^\perp$.

\begin{lemma}
\label{lem:perpt}
At each time step $t \in [T]$ of the algorithm the following holds:
\begin{equation}
\label{eq:perpt}
v^{\perp}(t) = \sum_{i=f_t}^{t} \Pi_{V_{i-1}/V_i}(v_{n(f_t)}),
\end{equation}
where the subspaces $V_{i-1}/V_i$ are mutually orthogonal.
\end{lemma}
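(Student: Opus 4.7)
The plan is to prove the identity by induction on $t - f_t$, after first laying out the nested structure of the subspaces $V_i$ within a single phase. The essential bookkeeping is that, by the definition of a phase, the pivot $n(i)$ equals $n(f_t)$ for every $i \in [f_t, t]$; and moreover for every $i \in [f_t, t-1]$ the element frozen at the end of step $i$ must be different from the pivot (otherwise the phase would have ended at step $i$). Combined with $V_i = \mathrm{span}\set{v_j : j \in A_i, j \neq n(i)}$ and $A_{i+1} \subseteq A_i$, this yields the nested chain $V_t \subseteq V_{t-1} \subseteq \cdots \subseteq V_{f_t}$. I will then extend the chain one index further down by showing $v_{n(f_t)} \in V_{f_t - 1}$: if $f_t = 1$ this is immediate from the convention $V_0 = \mathrm{span}\set{v_1,\ldots,v_n}$, and otherwise it follows because $n(f_t) \in A_{f_t - 1}$ (the new pivot is alive at the start of step $f_t$, and $A_{f_t} \subseteq A_{f_t - 1}$) while $n(f_t) \neq n(f_t - 1)$, so $v_{n(f_t)}$ is one of the generators of $V_{f_t - 1}$.

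Mutual orthogonality of the subspaces $V_{i-1}/V_i = V_{i-1} \cap V_i^\perp$ for $i \in [f_t, t]$ is then immediate from the nesting: for $f_t \le i < j \le t$, one has $V_{j-1}/V_j \subseteq V_{j-1} \subseteq V_i$ while $V_{i-1}/V_i \subseteq V_i^\perp$, so the two are orthogonal. With this orthogonal decomposition in hand, I run the induction. For the base case $t = f_t$, since $v_{n(f_t)} \in V_{f_t - 1}$, the splitting $V_{f_t - 1} = V_{f_t} \oplus (V_{f_t - 1}/V_{f_t})$ gives $v_{n(f_t)} = \Pi_{V_{f_t}} v_{n(f_t)} + \Pi_{V_{f_t - 1}/V_{f_t}} v_{n(f_t)}$, hence $v^\perp(f_t) = \Pi_{V_{f_t}^\perp} v_{n(f_t)} = \Pi_{V_{f_t - 1}/V_{f_t}} v_{n(f_t)}$, matching \eqref{eq:perpt}. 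For the inductive step $t > f_t$, the inclusion $V_t \subseteq V_{t-1}$ splits $V_t^\perp$ as $V_{t-1}^\perp \oplus (V_{t-1}/V_t)$, so $v^\perp(t) = \Pi_{V_{t-1}^\perp} v_{n(f_t)} + \Pi_{V_{t-1}/V_t} v_{n(f_t)}$. Since $n(t-1) = n(f_t)$ and $f_{t-1} = f_t$, the first summand is exactly $v^\perp(t-1)$, which the induction hypothesis expands as $\sum_{i=f_t}^{t-1} \Pi_{V_{i-1}/V_i} v_{n(f_t)}$; appending the last summand yields \eqref{eq:perpt}.

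The main obstacle, such as it is, is purely notational: one must be careful that $V_{f_t - 1}$ belongs to the \emph{previous} phase and yet houses $v_{n(f_t)}$, since this is exactly what allows the sum in \eqref{eq:perpt} to start at $i = f_t$ (supplying the leading summand $\Pi_{V_{f_t - 1}/V_{f_t}} v_{n(f_t)}$). Once this boundary behavior is handled correctly, everything else reduces to a standard orthogonal decomposition along a nested chain.
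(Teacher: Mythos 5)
Your proof is correct and takes essentially the same route as the paper's: establish $v_{n(f_t)} \in V_{f_t-1}$, exploit the nested chain $V_{f_t-1} \supseteq V_{f_t} \supseteq \cdots \supseteq V_t$, and decompose $\Pi_{V_t^\perp}$ along the mutually orthogonal quotients $V_{i-1}/V_i$. The only difference is organizational: the paper extracts the decomposition of $\Pi_{W_t^\perp}(y)$ along a nested chain as a separate Lemma~\ref{lem:decomp} and applies it, whereas you carry out the equivalent argument as an inline induction on $t - f_t$, peeling off one orthogonal summand $V_{t-1}/V_t$ per step.
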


Lemma~\ref{lem:perpt} follows directly from the following useful fact.

\begin{lemma}
\label{lem:decomp}
Let $\R^m \supseteq W_0 \supseteq W_1 \supseteq \cdots \supseteq W_t $, $t \geq
1$, denote a non-increasing sequence of linear subspaces and let $y \in W_0$.
Then
\begin{equation}
\label{eq:decomp}
\Pi_{W_t^\perp}(y) = \sum_{i=1}^t \Pi_{W_{i-1} / W_i}(y),
\end{equation}
where the subspaces $W_{i-1}/W_i$ for $i\in [t]$ are mutually orthogonal and are also orthogonal to $W_t$.
\end{lemma}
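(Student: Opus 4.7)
The plan is to prove both the orthogonality statements and the identity by induction on $t$, using the single basic fact that for nested subspaces $W \supseteq W'$ we have the orthogonal direct sum decomposition $W = W' \oplus (W/W')$ where $W/W' = W \cap W'^\perp$.

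First I would verify the orthogonality. For $i < j$, the chain gives $W_{j-1}/W_j \subseteq W_{j-1} \subseteq W_i$, while $W_{i-1}/W_i \subseteq W_i^\perp$ by definition; hence these two summands are mutually orthogonal. The same argument with $j$ replaced by $t+1$ (using $W_t \subseteq W_i$ for all $i \le t$) shows $W_{i-1}/W_i \perp W_t$, so in particular each summand on the right of \eqref{eq:decomp} lies in $W_t^\perp$, and their sum does too.

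Next I would prove the identity by induction on $t$. For the base case $t = 1$, since $y \in W_0 = W_1 \oplus (W_0/W_1)$, we can write $y = \Pi_{W_1}(y) + \Pi_{W_0/W_1}(y)$, and subtracting gives $\Pi_{W_1^\perp}(y) = \Pi_{W_0/W_1}(y)$ (note that $W_0/W_1 \subseteq W_1^\perp$, so this piece is already in $W_1^\perp$). For the inductive step, decompose $y = \Pi_{W_{t-1}}(y) + \Pi_{W_{t-1}^\perp}(y)$ and apply $\Pi_{W_t^\perp}$ to each piece. The second piece lies in $W_{t-1}^\perp \subseteq W_t^\perp$, so it is fixed by $\Pi_{W_t^\perp}$, and by the inductive hypothesis (applied to $y \in W_0$) it equals $\sum_{i=1}^{t-1}\Pi_{W_{i-1}/W_i}(y)$. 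The first piece lies in $W_{t-1} = W_t \oplus (W_{t-1}/W_t)$, so its projection onto $W_t^\perp$ equals $\Pi_{W_{t-1}/W_t}(\Pi_{W_{t-1}}(y))$; since $W_{t-1}/W_t \subseteq W_{t-1}$ and $y - \Pi_{W_{t-1}}(y) \in W_{t-1}^\perp$ is orthogonal to it, this simplifies to $\Pi_{W_{t-1}/W_t}(y)$. Summing the two contributions yields $\sum_{i=1}^{t}\Pi_{W_{i-1}/W_i}(y)$, as required.

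There is no real obstacle here; the lemma is an iterated orthogonal-complement identity, and the only thing to take care of is to use $y \in W_0$ (and the nesting $W_{t-1}^\perp \subseteq W_t^\perp$) consistently so that the inductive hypothesis applies. Once established, Lemma~\ref{lem:perpt} follows immediately by specializing to the chain $W_i = V_{f_t - 1 + i}$ for $i = 0,\dots,t-f_t+1$ with $y = v_{n(f_t)} = v_{n(t)} \in V_{f_t - 1}$, and re-indexing the resulting sum.
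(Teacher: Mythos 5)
Your proof is correct and follows essentially the same route as the paper: both arguments rest on the orthogonal direct sum $W_{i-1} = W_i \oplus (W_{i-1}/W_i)$ and the pairwise orthogonality of the summands, the only organizational difference being that you induct directly on the projection identity (peeling off $W_{t-1}/W_t$ at each step), whereas the paper first builds the full orthogonal decomposition $W_0 = \bigoplus_{i=1}^{t}(W_{i-1}/W_i) \oplus W_t$ by induction and then applies $\Pi_{W_t^\perp}$ once at the end. Your closing remark about specializing to $W_i = V_{f_t-1+i}$ also matches how the paper deduces Lemma~\ref{lem:perpt}.
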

\begin{proof}
We first prove orthogonality. Let $W_{t+1}=\emptyset$ denote the empty subspace. Then $W_t = W_t / W_{t+1}$. Now for $1 \leq i < j \leq t+1$, we need to show that
$W_{i-1}/W_i$ and $W_{j-1}/W_j$ are orthogonal. This follows
directly since $W_{i-1}/W_i$ is orthogonal to the subspace $W_i$ and
$W_{j-1}/W_j \subseteq W_{j-1} \subseteq W_i$ since $j-1 \geq i$.

To prove \eqref{eq:decomp}, notice that ${\rm span}\{W_i\cup (W_{i-1}/ W_i)\} =W_{i-1}$ and thus by induction,
\[W_0={\rm span} \{ \cup_{i=1}^{t+1} (W_{i-1}/ W_i )\} \]
where each of the subspaces $W_{i-1}/ W_i$ are mutually orthogonal. This implies that for $y\in W_0$,
\[ y =\Pi_{W_0}(y)= \sum_{i=1}^{t+1} \Pi_{W_{i-1} / W_i}(y)  = \sum_{i=1}^{t} \Pi_{W_{i-1} / W_i}(y) +  \Pi_{W_{t}}(y) .\]
The lemma follows now by observing that $\Pi_{W_t^{\perp}} ( \Pi_{W_{t}}) =0 $ and $\Pi_{W_t^{\perp}}(  \Pi_{W_{i-1} / W_i}  )  = \Pi_{W_{i-1} / W_i}$ as $W_{i-1} / W_i \subseteq W_t^{\perp}$.
%
%We now prove the validity of the decomposition~\eqref{eq:decomp} by induction on
%$t$. For the base case $t=1$, we need to show that $\Pi_{W_0/W_1}(y) =
%\Pi_{W_1^\perp}(y)$. Since $y-\Pi_{W_1^\perp}(y) \in W_1 \subseteq W_0$
%and $y \in W_0$, we get
%\[
%\Pi_{W_1^\perp}(y) = (\Pi_{W_1^\perp}(y)-y) + y \in W_0 + W_0 = W_0 .
%\]
%In particular,
%\[
%\Pi_{W_1^\perp}(y) \in W_0 \cap W_1^\perp \Rightarrow \Pi_{W_0/W_1}(y) =
%\Pi_{W_1^\perp}(y) ,
%\]
%as needed. For $t > 1$, we apply the induction hypothesis on $t-1$ to conclude that
%\[
%\Pi_{W_{t-1}^\perp}(y) = \sum_{i=1}^{t-1} \Pi_{W_{i-1}/W_i}(y) .
%\]
%Since $\Pi_{W_{t-1}^\perp}(y)$ is orthogonal to $W_{t-1}$, it is also orthogonal
%to $W_t$ and $W_{t-1}/W_t$, and thus
%\[
%\Pi_{W_t^\perp}(\Pi_{W_{t-1}^\perp}(y)) = \Pi_{W_{t-1}^\perp}(y)
%\quad \text{ and } \quad \Pi_{W_{t-1}/W_t}(\Pi_{W_{t-1}^\perp}(y)) = 0 .
%\]
%Let $y_{t-1} := \Pi_{W_{t-1}}(y) = y-\Pi_{W_{t-1}^\perp}(y)$. Since $y_{t-1} \in W_{t-1}$, applying
%the induction hypothesis to $W_{t-1} \supseteq W_t$ and $y_{t-1}$, we get that
%\begin{align*}
%\Pi_{W_t^\perp}(y_{t-1}) &= \Pi_{W_{t-1}/W_t}(y_{t-1}) \\
%                         &= \Pi_{W_{t-1}/W_t}(y-\Pi_{W_{t-1}^\perp}(y)) \\
%                         &= \Pi_{W_{t-1}/W_t}(y) .
%\end{align*}
%Combining everything together, we get
%\begin{align*}
%\Pi_{W_t^\perp}(y)
%        &= \Pi_{W_t^\perp}(\Pi_{W^\perp_{t-1}}(y)) + \Pi_{W_t^\perp}(\Pi_{W_{t-1}}(y)) \\
%        &= \Pi_{W^\perp_{t-1}}(y) + \Pi_{W_{t-1}/W_t}(y) \\
%        &= \sum_{i=1}^t \Pi_{W_{i-1}/W_i}(y) ,
%\end{align*}
%as needed.
\end{proof}

\begin{proof}[Proof of Lemma~\ref{lem:perpt}]
First, by definition of $f_t$, note that $n(t) = n(f_t)$ and hence
\[
v^\perp(t) = \Pi_{V_t^\perp}(v_{n(t)}) = \Pi_{V_t^\perp}(v_{n(f_t)}) .
\]

We now check that $v_{n(f_t)} \in V_{f_t-1}$. If $f_t = 1$, this is immediate
since $V_0 = {\rm span}\set{v_1,\ldots,v_n}$ by convention. Otherwise, since $n(f_t)$ is both alive and
not the pivot at time $f_t-1 \geq 1$, we have that $n(f_t) \in A_{f_t-1}
\Rightarrow v_{n(f_t)} \in V_{f_t-1}$. The main statement now follows directly by
applying Lemma~\ref{lem:decomp} on the subspaces $V_{f_t-1} \supseteq V_{f_t}
\supseteq \cdots \supseteq V_t$ and the vector $v_{n(f_t)}$.
\end{proof}

%Let $\mathcal{C}_k$ denote the set of elements frozen during phase $k$, . For a time $t$ in phase $k$, let $\mathcal{C}_k(t)$ denote the set of elements frozen during phase $k$ till the end of time step $(t-1)$ plus the pivot element of phase $k$. Note that $\mathcal{C}_k(t_2)=\mathcal{C}_k$.

%Our strategy will be to charge the discrepancy incurred in a phase to the length of projection of $\theta$ on a subspace of $\mathbb{R}^m$ of dimension at most $|\mathcal{C}_k|$. We will show later that for different phases these subspaces are orthogonal which will ensure that the discrepancy at the end of the algorithm cannot be too large.

\subsection{Discrepancy in a Phase}
We now bound the discrepancy incurred during any subinterval of a phase. We first need the following simple but very useful fact.
\begin{lemma}
\label{lem:sumofdelta}
Let $p \le q$ be any two time steps in $[t^b_k, t^e_k]$ during a phase $k \in [\kappa]$.
% which starts at time $t_1$ and ends at time $t_2$ i.e. $t_1 \le p   \le q \le t_2  $.
Then $ \left|\sum_{t=p}^{q} \delta_{t}\right| \le 2.$
\end{lemma}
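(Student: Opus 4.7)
My plan is to exploit the fact that within a single phase $k$, the pivot element is constant: by definition, $n(t) = n(t^b_k) =: \ell$ for all $t \in [t^b_k, t^e_k]$. The key observation is that the update direction satisfies $u_t(\ell) = 1$ at every time step $t$ in the phase, since the algorithm explicitly sets $u_t(i) = 1$ whenever $i = n(t)$. Consequently, the $\ell$-th coordinate evolves as a one-dimensional random walk:
\[
x_t(\ell) = x_{t-1}(\ell) + \delta_t u_t(\ell) = x_{t-1}(\ell) + \delta_t,
\]
for every $t \in [t^b_k, t^e_k]$.

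Next, I would simply telescope. Summing this recursion from $t = p$ to $t = q$ (with $p, q \in [t^b_k, t^e_k]$) yields
\[
\sum_{t=p}^{q} \delta_t \;=\; x_q(\ell) - x_{p-1}(\ell).
\]
Since the algorithm maintains $\|x_t\|_\infty \le 1$ at all times (this is guaranteed by the very choice of $\delta_t^+$ and $\delta_t^-$ as the positive and negative solutions of $\max_{i \in A_t}|x_{t-1}(i) + \delta u_t(i)| = 1$), both $x_q(\ell)$ and $x_{p-1}(\ell)$ lie in $[-1,1]$. The triangle inequality therefore gives
\[
\left|\sum_{t=p}^{q} \delta_t\right| \le |x_q(\ell)| + |x_{p-1}(\ell)| \le 2,
\]
which is exactly the claim.

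There is no real obstacle here; the proof is essentially a one-line observation once one notices that the pivot coordinate receives an unweighted update ($u_t(\ell) = 1$) during the entire phase, so that telescoping converts the sum of $\delta_t$'s into a difference of two coordinates of the fractional coloring, each of which is bounded in absolute value by $1$ by the invariant maintained by the step-size selection.
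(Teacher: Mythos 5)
Your proof is correct and is essentially identical to the paper's: both observe that the pivot coordinate $n(t)$ is fixed within the phase and receives the full update $\delta_t$ (since $u_t(n(t))=1$), telescope to get $\sum_{t=p}^q \delta_t = x_q(n(p)) - x_{p-1}(n(p))$, and bound the difference by $2$ using the invariant $\|x_t\|_\infty \le 1$.
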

\begin{proof}
At any time $t \in [p,q]$, note that $n(t) = n(p)$, i.e.~the pivot remains
unchanged. The color of the pivot element $n(p)$ at time $t \in [p,q]$ is
updated by $\delta_t u_t(n(p)) = \delta_t$ and hence
\[ x_{q}(n(p)) - x_{p-1}(n(p)) = \sum_{t=p}^{q} \delta_{t} . \]
As $|x_t(n(p))| \le 1$ for all $t \in [T]$, we have that $|x_{q}(n(p)) -
x_{p-1}(n(p))| \le 2$ as needed.
\end{proof}

\begin{lemma}
\label{lem:discphase}
Let $p \le q$ be any two time steps in $[t^b_k, t^e_k]$ during a phase $k \in [\kappa]$.
The discrepancy
$|\disc_{q}-\disc_{p-1}|$ incurred during the time interval $[p,q]$ is
at most $2 \|\theta^{(k)}\|_2$, where $\theta^{(k)} := \Pi_{V_{t^b_k-1} / V_{t^e_k}}(\theta)$.
Furthermore, the subspaces $V_{t^b_k-1} / V_{t^e_k}$ for $k \in [\kappa]$ are mutually orthogonal.
\end{lemma}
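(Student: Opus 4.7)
The plan is to use Lemma~\ref{lem:changedisc} together with the Gram-Schmidt-style decomposition of Lemma~\ref{lem:perpt} to rewrite the discrepancy increment over $[p,q]$ as an inner product against $\theta^{(k)}$, and then to bound the norm of the associated vector using Lemma~\ref{lem:sumofdelta}. Cauchy-Schwarz then completes the estimate.

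Writing $n_k := n(t^b_k)$ and $w_i := \Pi_{V_{i-1}/V_i}(v_{n_k})$ for $i \in [t^b_k, t^e_k]$, Lemma~\ref{lem:perpt} gives $v^\perp(t) = \sum_{i=t^b_k}^{t} w_i$ with the $w_i$ mutually orthogonal. Since $V_{i-1} \subseteq V_{t^b_k-1}$ and $V_{t^e_k} \subseteq V_i$ for every $i$ in this range, each $w_i$ lies in $V_{t^b_k-1}/V_{t^e_k}$, and hence so does $v^\perp(t)$. Therefore $\langle \theta, v^\perp(t)\rangle = \langle \theta^{(k)}, v^\perp(t)\rangle$, and telescoping Lemma~\ref{lem:changedisc} yields
\[
\disc_q - \disc_{p-1} \;=\; \left\langle \theta^{(k)},\; \sum_{t=p}^q \delta_t\, v^\perp(t) \right\rangle.
\]

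The main work is then to show $\left\|\sum_{t=p}^q \delta_t\, v^\perp(t)\right\|_2 \leq 2$. Swapping the order of summation,
\[
\sum_{t=p}^q \delta_t\, v^\perp(t) \;=\; \sum_{i=t^b_k}^q S_i\, w_i, \qquad S_i := \sum_{t=\max(p,i)}^q \delta_t.
\]
Each $S_i$ is a sum of $\delta_t$'s over a subinterval of phase $k$, so $|S_i| \leq 2$ by Lemma~\ref{lem:sumofdelta}. By Lemma~\ref{lem:decomp} applied to the chain $V_{t^b_k-1} \supseteq V_{t^b_k} \supseteq \cdots \supseteq V_{t^e_k}$ and the vector $v_{n_k} \in V_{t^b_k-1}$, the $w_i$ are orthogonal with $\sum_{i=t^b_k}^{t^e_k} \|w_i\|_2^2 = \|\Pi_{V_{t^e_k}^\perp}(v_{n_k})\|_2^2 \leq \|v_{n_k}\|_2^2 \leq 1$. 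Combining these, $\|\sum_i S_i w_i\|_2^2 \leq 4 \sum_i \|w_i\|_2^2 \leq 4$, and Cauchy-Schwarz gives $|\disc_q - \disc_{p-1}| \leq 2\|\theta^{(k)}\|_2$.

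For the mutual orthogonality of the subspaces $V_{t^b_k-1}/V_{t^e_k}$ across phases, note that for $k < k'$ we have $t^b_{k'}-1 \geq t^e_k$, so the monotone nesting $V_s \subseteq V_{s-1}$ gives $V_{t^b_{k'}-1} \subseteq V_{t^e_k}$. Any vector in $V_{t^b_{k'}-1}/V_{t^e_{k'}}$ therefore lies in $V_{t^e_k}$, while any vector in $V_{t^b_k-1}/V_{t^e_k}$ is orthogonal to $V_{t^e_k}$ by definition, so their inner product vanishes. The genuinely nontrivial part is the swap-and-regroup step combined with the partial-sum bound $|S_i| \leq 2$; the rest is bookkeeping on the nested chain of subspaces $V_t$.
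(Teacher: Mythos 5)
Your proof is correct and rests on exactly the same three ingredients as the paper's: the orthogonal decomposition of $v^\perp(t)$ from Lemma~\ref{lem:perpt}, the partial-sum bound $|S_i| \le 2$ from Lemma~\ref{lem:sumofdelta}, and the Pythagorean identity $\sum_i \|w_i\|_2^2 \le \|v_{n_k}\|_2^2 \le 1$ from Lemma~\ref{lem:decomp}. The difference is purely in how Cauchy--Schwarz is deployed. The paper pushes the absolute value inside the double sum, getting $2\sum_i \left|\ip{\Pi_{V_{i-1}/V_i}(v_h)}{\Pi_{V_{i-1}/V_i}(\theta)}\right|$, and then applies Cauchy--Schwarz twice (once per coordinate $i$, once across the sum over $i$), splitting the $v_h$ budget and the $\theta$ budget only at the end. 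You instead keep the sign structure intact, package the whole phase increment as a single inner product $\ip{\theta^{(k)}}{\sum_{t=p}^q \delta_t v^\perp(t)}$ after observing each $v^\perp(t)$ lies in $V_{t^b_k-1}/V_{t^e_k}$, and bound the vector $\sum_t \delta_t v^\perp(t)$ directly by $\|\sum_i S_i w_i\|_2^2 = \sum_i S_i^2 \|w_i\|_2^2 \le 4$, which uses orthogonality as Pythagoras rather than as an intermediate Cauchy--Schwarz. Both arguments are tight for the same configurations and give the identical constant $2$. Your packaging is arguably slightly cleaner geometrically: it isolates the statement that the total signed walk displacement inside a phase has Euclidean norm at most $2$, a fact that is worth stating on its own, whereas in the paper this bound is implicit. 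There is no gap, and the orthogonality-across-phases argument at the end is word-for-word the paper's.
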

\begin{proof}
We first prove orthogonality. Take $k_1,k_2 \in [\kappa]$, where $k_1 < k_2$. We
must show that the subspaces $V_{t^b_{k_1}-1} / V_{t^e_{k_1}}$ and $V_{t^b_{k_2}-1} /
V_{t^e_{k_2}}$ are orthogonal. This follows since the first subspace is
orthogonal to $V_{t^e_{k_1}}$ and the second subspace is contained in
$V_{t^b_{k_2}-1} \subseteq V_{t^e_{k_1}}$ since $t^b_{k_2}-1 \geq t^e_{k_1}$. We now prove the main statement. First, note that
\begin{equation}
\label{eq:dp1}
disc_{q}-disc_{p-1} = \sum_{t=p}^{q} \delta_t \ip{\theta}{v^\perp(t)}\text {.}
\end{equation}
By Lemma~\ref{lem:perpt}, for $t \in [p,q] \subseteq
[t^b_k,t^e_k]$, letting $h := n(t^b_k) = n(t)$ denote the pivot index, we have that
\begin{equation}
\label{eq:dp2}
v^\perp(t) := \Pi_{V_t^\perp}(v_h) = \sum_{i=t^b_k}^{t} \Pi_{V_{i-1}/V_i}(v_h) .
\end{equation}
Combining~\eqref{eq:dp1},~\eqref{eq:dp2} we get that
\begin{eqnarray}
\left|disc_{q}-disc_{p-1}\right| &=& \left|\sum_{t=p}^{q} \delta_t \cdot
\sum_{i=t^b_k}^t \ip{\Pi_{V_{i-1}/V_i}(v_h)}{\theta}\right| \nonumber\\
   &=& \left|\sum_{i=t^b_k}^{q} \ip{\Pi_{V_{i-1}/V_i}(v_h)}{\theta} \cdot \sum_{t= \max \set{i,p}}^{q} \delta_t
\right| \nonumber\\
   &\leq& \sum_{i=t^b_k}^{q} \left|\ip{\Pi_{V_{i-1}/V_i}(v_h)}{\theta}\right| \cdot \left|\sum_{t= \max \set{i,p}}^{q} \delta_t\right| \nonumber
 \\
   &\leq & 2 \sum_{i=t^b_k}^{q} \left|\ip{\Pi_{V_{i-1}/V_i}(v_h)}{\theta}\right| \quad
\left(\text{ by Lemma~\ref{lem:sumofdelta} }\right)\nonumber \\
   &=& 2 \sum_{i=t^b_k}^{q} \left|\ip{\Pi_{V_{i-1}/V_i}(v_h)}{\Pi_{V_{i-1}/V_i}(\theta)}\right| . \label{eq:dp3}
\end{eqnarray}
Now applying Cauchy-Schwarz inequality twice, we get
\begin{eqnarray}
\sum_{i=t^b_k}^{q} \left|\ip{\Pi_{V_{i-1}/V_i}(v_h)}{\Pi_{V_{i-1}/V_i}(\theta)}\right|
&\leq &\sum_{i=t^b_k}^{q}
\|\Pi_{V_{i-1}/V_i}(v_h)\|_2 \|\Pi_{V_{i-1}/V_i}(\theta)\|_2 \nonumber \\
&\leq &\left(\sum_{i=t^b_k}^{q}
\|\Pi_{V_{i-1}/V_i}(v_h)\|_2^2\right)^{1/2} \cdot \left(\sum_{i=t^b_k}^{q} \|\Pi_{V_{i-1}/V_i}(\theta)\|_2^2\right)^{1/2}\nonumber \\
&\leq &\|\Pi_{V_{t^b_k-1}/V_{q}}(v_h)\|_2 \cdot \|\Pi_{V_{t^b_k-1}/V_{q}}(\theta)\|_2\nonumber \\
&\leq & \|\Pi_{V_{t^b_k-1}/V_{q}}(\theta)\|_2 \quad \left(~\text{since } \|v_h\|_2 \leq 1 ~\right) \nonumber \\
&\leq & \|\Pi_{V_{t^b_k-1}/V_{t^e_k}}(\theta)\|_2 \quad \left(~\text{since } V_q\supseteq V_{t^e_k} ~\right) . \label{eq:dp4}
\end{eqnarray}
where the third inequality follows by orthogonality of the subspaces
$V_{i-1}/V_i$ for all $i \in [t^b_k,t^e_k]$, and their containment inside
$V_{t^b_k-1}/V_{q}$. The lemma now follows by combining~\eqref{eq:dp3},~\eqref{eq:dp4}.
\end{proof}

Notice that discrepancy $\{\disc_t: t \ge 0\}$ is a martingale. For the rest of the
analysis, we will define another closely related martingale $\{Y_t\}$ and show
that the subgaussianity of $\{\disc_t\}$ follows from the subgaussianity of $\{Y_t\}$. % tail bound on discrepancy follows from a subgaussian tailbound on $\{Y_t\}$.
Then, we will show the subgaussianity of $\{Y_t\}$.
Define the random process $\{Y_t:  t\ge 0\}$ as
%\slnote{why $1/8$? is this to optimize the subgaussinaity? otherwise, why not choose $1$?}
\begin{align*}
Y_t :=
	\begin{cases}
		0  & \mbox{if } t=0 \\
		Y_{t-1} & \mbox{if } \| \Pi_{V_{f_t-1}/V_{t}} (\theta) \|_2 > 1/8 \\
		Y_{t-1}+ \Delta_t \disc & \mbox{otherwise}
	\end{cases}
\; .
\end{align*}
Note that $\E[Y_t|Y_1,\ldots,Y_{t-1}] = Y_{t-1}$ and thus $\{Y_t\}$ is a martingale.
To prove Theorem~\ref{thm:main}, we will bound the exponential moment of $disc_n$.

\begin{theorem}
\label{thm:moment-bound}
$ \E[e^{\disc_n}] \le e^{20 \|\theta\|_2^2}. $
\end{theorem}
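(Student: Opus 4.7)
The plan is to decompose $\disc_n = Y_n + R_n$, where $R_n := \disc_n - Y_n$ captures the cumulative discrepancy accrued during the ``bad'' time steps (those at which $\|\Pi_{V_{f_t-1}/V_t}(\theta)\|_2 > 1/8$). I will bound the two summands separately: $R_n$ deterministically by $16\|\theta\|_2^2$, and $\E[e^{Y_n}] \le e^{4\|\theta\|_2^2}$ via Lemma~\ref{lem:prel2}. Multiplying the two estimates then yields the desired $\E[e^{\disc_n}] \le e^{16\|\theta\|_2^2}\cdot e^{4\|\theta\|_2^2} = e^{20\|\theta\|_2^2}$.

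For the deterministic bound on $R_n$, the key observation is that within each phase $k$ the quantity $\|\Pi_{V_{t^b_k-1}/V_t}(\theta)\|_2$ is non-decreasing in $t$, because $V_t$ is non-increasing within a phase and the subspaces $V_{t^b_k-1}/V_t$ are therefore nested increasingly in $t$. Consequently the bad steps of phase $k$ form a contiguous suffix $[s_k, t^e_k]$ (possibly empty). Applying Lemma~\ref{lem:discphase} to this suffix bounds phase $k$'s contribution to $R_n$ by $2\|\theta^{(k)}\|_2$. If phase $k$ contains any bad step, then the badness of step $s_k$ forces $\|\theta^{(k)}\|_2 \ge \|\Pi_{V_{t^b_k-1}/V_{s_k}}(\theta)\|_2 > 1/8$, which permits the self-bounding inequality $2\|\theta^{(k)}\|_2 \le 16\|\theta^{(k)}\|_2^2$. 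Summing over phases and invoking the cross-phase orthogonality of the subspaces $V_{t^b_k-1}/V_{t^e_k}$ guaranteed by Lemma~\ref{lem:discphase} yields $|R_n| \le 16\sum_k \|\theta^{(k)}\|_2^2 \le 16\|\theta\|_2^2$.

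For $\E[e^{Y_n}]$, first note that $|\Delta_t Y| \le 1/4$, since $Y$ only changes on good steps, at which $|\langle \theta, v^\perp(t) \rangle| \le \|\Pi_{V_{f_t-1}/V_t}(\theta)\|_2 \le 1/8$ and $|\delta_t| \le 2$ (Lemma~\ref{lem:sumofdelta} applied with $p=q$). Define $\sigma_t^2 := \E[(\Delta_t Y)^2 \mid Y_1,\dots,Y_{t-1}]$, so $\sigma_t^2 \le 1/16$. I would apply Lemma~\ref{lem:prel2} with $Z_t := Y_t - 2\sum_{s\le t}\sigma_s^2$ and $X_t := \Delta_t Y - 2\sigma_t^2$: one checks that $X_t \le 1/4 \le 1$ and $\E[X_t + X_t^2 \mid Y_1,\dots,Y_{t-1}] = -\sigma_t^2 + 4\sigma_t^4 \le 0$ (using $\sigma_t^2 \le 1/4$). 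Lemma~\ref{lem:prel2} then gives $\E[e^{Y_n - 2\sum_t \sigma_t^2}] \le 1$. The main obstacle is to upgrade this to $\E[e^{Y_n}] \le e^{4\|\theta\|_2^2}$ by establishing the almost-sure bound $\sum_t \sigma_t^2 \le 2\|\theta\|_2^2$; the delicacy is that $\sigma_t^2$ reflects the cumulative direction $v^\perp(t)$ built up over a phase rather than a single-step orthogonal increment. I expect to handle this by a per-phase summation-by-parts argument: using Lemma~\ref{lem:perpt} to expand $\langle \theta, v^\perp(t) \rangle$ as a telescoping sum over the subspaces $V_{s-1}/V_s$ and absorbing the resulting cross-terms via the running-sum bound $|\sum_{t=s}^{q}\delta_t| \le 2$ of Lemma~\ref{lem:sumofdelta}, one should be able to collapse the phase-$k$ variance sum to a multiple of $\|\theta^{(k)}\|_2^2$, which then telescopes across phases into $O(\|\theta\|_2^2)$ by the cross-phase orthogonality already used above.
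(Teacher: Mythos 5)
Your decomposition $\disc_n = Y_n + R_n$ and your deterministic bound $|R_n| \leq 16\|\theta\|_2^2$ are exactly the paper's Lemma~\ref{lem:l1}, and that part of the argument is correct. The gap is in the second half. The almost-sure bound $\sum_t \sigma_t^2 \leq 2\|\theta\|_2^2$ that you identify as ``the main obstacle'' is in fact false in general, and the tools you propose (Lemma~\ref{lem:perpt} plus the running-sum bound $|\sum \delta_t|\leq 2$ from Lemma~\ref{lem:sumofdelta}) cannot establish it. The running-sum bound controls \emph{linear} functionals of $(\delta_p,\ldots,\delta_q)$; it says nothing about $\sum_t \E_{t-1}[\delta_t^2]$. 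Concretely, within a single phase the pivot coordinate $x_t(n(t))$ can oscillate in the interior of $[-1,1]$ for many steps while non-pivot elements freeze one at a time, and at each such step the conditional variance $\E_{t-1}[\delta_t^2]\cdot\langle\theta,v^\perp(t)\rangle^2$ can be of order $\langle\theta,v^\perp(t)\rangle^2$, which is only constrained to be $\leq 1/64$ at good steps. Since a phase can last up to $n-1$ steps, the realized predictable quadratic variation $\sum_t\sigma_t^2$ scales with the phase length, not with $\|\theta\|_2^2$. What \emph{is} true is that its expectation is bounded (the expected decrease of $1-x_t(n(t))^2$ equals $\E_{t-1}[\delta_t^2]$, and this quantity starts at most $1$), but an in-expectation bound does not feed into Lemma~\ref{lem:prel2} the way you need.

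This is precisely why the paper does not attempt a pathwise bound on the quadratic variation. Instead it introduces the potential $\Phi^e_t$, whose good-step term $(1-x_t(n(t))^2)\|\Pi_{R_t/V_t}(\theta)\|_2^2$ is a \emph{random} quantity whose conditional expected drop, $\E_{t-1}[\delta_t(\delta_t+2x)]\|\Pi_{R_t/V_t}(\theta)\|_2^2 = \E_{t-1}[\delta_t^2]\|\Pi_{R_t/V_t}(\theta)\|_2^2$, matches (up to the factor $4$) the conditional variance of $\Delta_t Y$. Setting $Z_t := Y_t + 4\Phi^e_t$ then gives a process with the drift property $\E_{t-1}[X_t + X_t^2]\leq 0$ \emph{step by step} (Lemma~\ref{lem:inc-reqs}), plus a free non-positive increment $4(\Phi^b_t - \Phi^e_{t-1})$ (Lemma~\ref{lem:freedrop}). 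Since $\Phi^e_t\geq 0$ and $\Phi^e_0 \leq \|\theta\|_2^2$, one gets $\E[e^{Y_T}]\leq\E[e^{Z_n}]\leq e^{Z_0}=e^{4\|\theta\|_2^2}$ directly, without ever needing a deterministic control of $\sum_t\sigma_t^2$. To repair your argument you would essentially have to re-invent this potential: replace your additive correction $-2\sum_{s\leq t}\sigma_s^2$ (which forces a pathwise bound) by the state-dependent correction $4\Phi^e_t$, which couples to the actual realized path through the pivot coordinate.
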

Note that this directly gives that the walk is $\sqrt{40} \approx 6.32$ subgaussian.

To control this exponential moment, we will split $\disc_n=\disc_T = (\disc_T-Y_T) +
Y_T$, where $(\disc_T-Y_T)$ will correspond to the discrepancy incurred during
``bad'' times and $Y_T$ corresponds to the ``good'' times. More precisely,  call a time step $t$
\textit{good} if $\| \Pi_{V_{f_t-1}/V_{t}} (\theta) \|_2 \le 1/8$ and
\textit{bad} otherwise. Let
\[
\mathcal{B} := \set{t \in [T]: \|\Pi_{V_{f_t-1}/V_t}(\theta)\|_2 > 1/8}
\]
denote the set of bad times. Note that
\[
\disc_T = Y_T+ \sum_{t \in \mathcal{B}} \Delta_t \disc.
\]
The following lemma now shows that the discrepancy incurred during the bad times
can be upper bounded deterministically.

\begin{lemma} $
\left|\sum_{t \in \mathcal{B}} \Delta_t \disc\right| < 16 \|\theta\|_2^2.
$
\label{lem:l1}
\end{lemma}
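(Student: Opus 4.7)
The plan is to decompose the sum over bad times into contributions from each phase, then exploit the per-phase discrepancy bound from Lemma~\ref{lem:discphase} together with the definition of badness.

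First I would observe that the badness criterion behaves monotonically within a phase. Since $V_t \subseteq V_{t-1}$ always, and the pivot does not change within a phase $k \in [\kappa]$, the subspace $V_{t^b_k-1}/V_t = V_{t^b_k-1} \cap V_t^\perp$ only grows as $t$ increases over $[t^b_k, t^e_k]$, so $\|\Pi_{V_{t^b_k-1}/V_t}(\theta)\|_2$ is non-decreasing in $t$. Hence the set of bad times inside phase $k$ forms a suffix $\mathcal{B}_k := \mathcal{B} \cap [t^b_k, t^e_k] = [\tau_k, t^e_k]$ (possibly empty). This is the key structural observation that lets me apply the interval-based bound of Lemma~\ref{lem:discphase}.

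Next, for any phase $k$ with $\mathcal{B}_k \neq \emptyset$, I apply Lemma~\ref{lem:discphase} with $p = \tau_k$ and $q = t^e_k$ to get
\[
\Bigl|\sum_{t \in \mathcal{B}_k} \Delta_t \disc\Bigr| \;=\; |\disc_{t^e_k} - \disc_{\tau_k - 1}| \;\leq\; 2\|\theta^{(k)}\|_2,
\]
where $\theta^{(k)} := \Pi_{V_{t^b_k-1}/V_{t^e_k}}(\theta)$. The crucial trick is to upgrade this linear bound to a quadratic one using the fact that phase $k$ contains at least one bad time (namely $\tau_k$): since $V_{\tau_k} \supseteq V_{t^e_k}$ we have $\|\theta^{(k)}\|_2 \geq \|\Pi_{V_{t^b_k-1}/V_{\tau_k}}(\theta)\|_2 > 1/8$, so $1 < 8\|\theta^{(k)}\|_2$, giving $2\|\theta^{(k)}\|_2 < 16\|\theta^{(k)}\|_2^2$.

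Finally I sum over phases containing bad times and invoke the orthogonality of the subspaces $\{V_{t^b_k-1}/V_{t^e_k}\}_{k \in [\kappa]}$ (also supplied by Lemma~\ref{lem:discphase}), which yields $\sum_k \|\theta^{(k)}\|_2^2 \leq \|\theta\|_2^2$ by Pythagoras. Combining,
\[
\Bigl|\sum_{t \in \mathcal{B}} \Delta_t \disc\Bigr| \;\leq\; \sum_{k : \mathcal{B}_k \neq \emptyset} \Bigl|\sum_{t \in \mathcal{B}_k} \Delta_t \disc\Bigr| \;<\; 16 \sum_k \|\theta^{(k)}\|_2^2 \;\leq\; 16\|\theta\|_2^2.
\]
The only slightly delicate step is verifying the monotonicity of $\|\Pi_{V_{f_t-1}/V_t}(\theta)\|_2$ in $t$ within a phase; everything else is plugging into earlier lemmas. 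The constant $16 = 2 \cdot 8$ is exactly what the interplay between the $2$ from Lemma~\ref{lem:discphase} and the $1/8$ badness threshold produces.
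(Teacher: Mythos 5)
Your proof is correct and follows essentially the same approach as the paper: decompose bad times into per-phase suffixes via the monotonicity of $\|\Pi_{V_{f_t-1}/V_t}(\theta)\|_2$ within a phase, bound each phase's bad-time discrepancy by $2\|\theta^{(k)}\|_2$ via Lemma~\ref{lem:discphase}, upgrade to $16\|\theta^{(k)}\|_2^2$ using the badness threshold $1/8$, and sum by orthogonality. The only cosmetic difference is that you name the first bad time $\tau_k$ explicitly, and you keep the inequality $\|\theta^{(k)}\|_2 > 1/8$ strict, which in fact matches the strict inequality in the lemma statement more faithfully than the paper's own write-up.
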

\begin{proof}
%Call a phase $k \in [\kappa]$ bad if there is some time $t$ in phase $k$ which
%is bad. Let $\mathcal{F} \subseteq [\kappa]$ denote the set of bad phases,
%where $|\mathcal{F}|$ is the number of bad phases. For each $k \in \mathcal{F}$,
%let $\mathcal{B}_k$ denote the bad time steps in phase $k$.

For each $k \in [\kappa]$, let $\mathcal{B}_k$ denote the set of bad time steps in phase $k$. Notice that in a phase, once a time step becomes bad, all subsequent time steps in that phase are bad as the length of
$\Pi_{V_{f_t-1}/V_{t}} (\theta)$ is a non-decreasing function of $t$ during a
phase (as $V_{t+1} \subseteq V_{t}$ for every $t$).  Thus, the set $\mathcal{B}_k$, if non-empty, forms a consecutive interval (ending at the end of phase $k$), and hence
Lemma~\ref{lem:discphase} can be applied to it.

Recall that we denoted $\theta^{(k)}:=\Pi_{V_{t^b_k-1}/V_{t^e_k}} (\theta)$ for $k \in [\kappa]$. Note that the vectors
$\theta^{(1)},\dots,\theta^{(\kappa)}$ are mutually orthogonal. Furthermore, by the argument above, for any $k \in [\kappa]$ with $\mathcal{B}_k\neq \emptyset$ we have that
$\|\theta^{(k)}\| \geq 1/8$.

% of $\theta$ on the subspace $span\{w_i: i\in\mathcal{C}_k\}$.

%We first claim that there can be at most $\lambda^2/25$ bad phases. To see this, recall that a phase $k$ is bad if there is a time $t\in\mathcal{B}_k$ such that $ \| \Pi_{W_k(t)} (\theta) \|_2 > 5/\lambda$. But then
%\[ \| \theta^{(k)}\|_2 \ge  \| \Pi_{W_k(t)} (\theta) \|_2 > 5/\lambda  \]
%where the inequality follows using $W_k(t) \subseteq span\{w_i: i\in\mathcal{C}_k\}$.
%Because the sets $\mathcal{C}_k$ are disjoint and hence the subspaces $span\{w_i: i\in \mathcal{C}_k\}$ are orthogonal,
%\[ 1=\| \theta\|_2^2  \ge \sum_{\text{bad phases }k} \| \theta^{(k)}\|^2_2 \ge 25 b/\lambda^2    \]
%which gives $b\le \lambda^2/25$.

The discrepancy incurred during bad times can now be bounded as follows
\begin{align*}
\left|\sum_{t \in \mathcal{B}} \Delta_t \disc\right|
&\leq \sum_{k \in [\kappa]} \left|\sum_{t \in \mathcal{B}_k} \Delta_t \disc\right| \\
&\leq \sum_{k \in [\kappa], \mathcal{B}_k\neq \emptyset} 2 \|\theta^{(k)}\|_2
\quad \left( \text{ by Lemma~\ref{lem:discphase} } \right) \\
& \leq  \sum_{k \in  [\kappa], \mathcal{B}_k\neq \emptyset} 16 \|\theta^{(k)}\|^2_2 \quad \left(\|\theta^{(k)}\| \geq 1/8 \text{ if } \mathcal{B}_k\neq \emptyset  \right) \\
&\leq 16 \|\theta\|^2_2 \quad \left( \text{ by orthogonality
of } \theta^{(1)},\dots,\theta^{(\kappa)} \right). \qedhere 
\end{align*}
%\begin{align*}
%|\sum_{t \in \mathcal{B}} \Delta_t \disc|
%&\leq \sum_{k \in \mathcal{F}} |\sum_{t \in \mathcal{B}_k} \Delta_t \disc| \\
%&\leq \sum_{k \in \mathcal{F}} 2 \|\theta^{(k)}\|_2
%\quad \left( \text{ by Lemma~\ref{lem:discphase} } \right) \\
%&\leq 2 \sqrt{|\mathcal{F}|} \sqrt{\sum_{k \in \mathcal{F}}
%\|\theta^{(k)}\|_2^2} \quad \left( \text{ by Cauchy-Schwarz } \right) \\
%&\leq 2 \sqrt{|\mathcal{F}|} \|\theta\|_2 \quad \left( \text{ by orthogonality
%of } \theta^{(1)},\dots,\theta^{(\kappa)} \right) .
%\end{align*}
%To finish the proof is suffices to show that $|\mathcal{F}| \leq 64 \|\theta\|_2^2$. This follows since
%\[
%\|\theta\|_2^2 \geq \sum_{k \in \mathcal{F}} \|\theta^{(k)}\|_2^2 \geq
%|\mathcal{F}|/64 ,
%\]
%by definition of a bad a phase. The lemma thus follows.
\end{proof}

\subsection{Bounding Discrepancy}
We now prove Theorem \ref{thm:moment-bound}. To this end, we will define potential functions that capture how the variance of $Y_t$ increases over time.
%To this end, we will use a potential function which is defined such that at each time step, the increase in variance equals the drop in the value of the potential function.
%
% anthat at the beginning of each
% time $t$ upper bounds the expected variance of the increments of $Y_{t'}$ over
% the steps $t'\geq t$.
%
%We use this potential to define another random process, and then use the concentration inequality from Section \ref{sec:prel}.
%
%The potential function is defined such that at each time step, the increase in variance equals the drop in the value of the potential function. If the potential function never increases, this will ensure that the total variance is at most the initial value of the potential function. We are going to use a potential which at the end of time step $t$ has value
%\[ \Phi_t = \langle \theta, S_t\rangle^2 + [1-x_{t}^2(n_t)] \left\langle \theta, \frac{v^{\perp}(t)}{\|v^{\perp}(t)\|_2} \right\rangle^2. \]
%
 For $t \in [T]$, define the subspace $R_t := {\rm span}\{v_i: i \in A_t\}$,
i.e.~the span of the active vectors. Note that at any timestep $t \in [T]$, the
discrepancy direction $v^\perp(t) \in R_t / V_t$. Consider the potentials

\begin{align*}
\Phi_t^b :=
	\begin{cases}
		 \|\Pi_{V_t}(\theta)\|^2 + (1-x_{t-1}(n(t))^2) \|\Pi_{R_t/V_t}(\theta)\|_2^2  & \mbox{if }t \text{ is good}  \\
		\|\Pi_{V_t}(\theta) \|^2  &\mbox{if } t \text{ is bad}
	\end{cases}
\end{align*}
and
\begin{align*}
\Phi_t^e :=
	\begin{cases}
		 \|\Pi_{V_t}(\theta)\|^2 + (1-x_t(n(t))^2) \|\Pi_{R_t/V_t}(\theta)\|_2^2  & \mbox{if }t \text{ is good}  \\
		 \|\Pi_{V_t}(\theta) \|^2  &\mbox{if } t \text{ is bad}
	\end{cases}
\; .
\end{align*}
Here we think of $\Phi_t^b$ as the potential at the beginning of time step $t$ and
$\Phi_t^e$ as the potential at the end of time step $t$. By convention, we define
$\Phi_0^e = \Phi_0^b = \|\theta\|_2^2$.

As by Lemma~\ref{lem:l1} we have deterministic control over the discrepancy
incurred during the bad times, our goal is now to control the discrepancy
during the good times. Namely, we need to bound the process
$Y_0,Y_1,\dots,Y_T$. Here, the main idea is to charge the drop in potential to the increase in discrepancy. To this end, we define the potential weighted
discrepancy process $\{Z_t\}$ for $t\ge 0$ as
\[
Z_t := Y_t + 4 \Phi^e_t .
\]
%Then \[Z_n- Z_0 \leq Y_n - Y_0 + \lambda/4 \Phi_{n+1} - \Phi_1 \]
Note that $Y_0=0$, $Z_0 = 4\Phi^e_0 = 4\|\theta\|_2^2$ and $Z_t \geq Y_t$ for all $t \geq 0$. 
For $t\in[T+1,n]$ we define $Z_t:=Z_{T}$. Our goal is to now show that the process $Z_0,\dots,Z_T$ has
a strong ``negative'' drift.

%In the lemma below, we state how the potential changes with time and we will later use this to charge the discrepancy incurred to the drop in potential.

The increments of $Z_t-Z_{t-1}$ for $t \in [T]$ can be expressed as follows:
\begin{equation}
\label{eq:increment}
Z_t-Z_{t-1} = Y_t-Y_{t-1} + 4(\Phi^e_t-\Phi^e_{t-1})
            = (Y_t-Y_{t-1} + 4(\Phi^e_t-\Phi^b_t)) + 4(\Phi^b_t-\Phi^e_{t-1}) .
\end{equation}
We decompose this increment into a ``predictable'' part
\begin{align}
\label{eq:predicable}
X_t := Y_t-Y_{t-1} + 4(\Phi^e_t-\Phi^b_t)
&= \begin{cases} \delta_t \ip{v^\perp(t)}{\theta}
- 4 (x_{t}(n(t))^2-x_{t-1}(n(t))^2) \|\Pi_{R_t/V_t}(\theta)\|_2^2 &\mbox{if }t \text{ is good} \\ 0 &\mbox{if }t \text{ is bad}
\end{cases} \nonumber \\
&= \begin{cases} \delta_t \ip{v^\perp(t)}{\theta}
- 4\delta_t(\delta_t + 2x_{t-1}(n(t))) \|\Pi_{R_t/V_t}(\theta)\|_2^2 &\mbox{if }t \text{ is good} \\ 0 &\mbox{if }t \text{ is bad}
\end{cases}
\end{align}
over which we will be able to get stochastic control,
and a ``free" part
\[ 4(\Phi^b_t-\Phi^e_{t-1})\]
which we show is always non-positive.  The following
crucial lemma shows that $4(\Phi^b_t-\Phi^e_{t-1})$ indeed gives us a ``free
drop'' in potential.

\begin{lemma}
\label{lem:freedrop}
For all $t \in [T]$, $\Phi^b_t \leq \Phi^e_{t-1}$. Hence, the increments satisfy $Z_t -Z_{t-1} \leq X_t$ for all $t \in [T]$.
\end{lemma}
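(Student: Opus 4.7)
The plan is to compare $\Phi^b_t$ with $\Phi^e_{t-1}$ by tracking how $A_t$, $V_t$, and $R_t$ evolve in one step. Exactly one index is frozen between steps, so $A_t = A_{t-1}\setminus\{j\}$ for some $j \in A_{t-1}$, and I would split into two cases based on whether $j$ equals the current pivot $n(t-1)$.

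In the first case the pivot itself is frozen, so $n(t)\neq n(t-1)$. The key observation is that $R_t = \mathrm{span}\{v_i : i \in A_{t-1}, i\neq n(t-1)\}$, which is by definition $V_{t-1}$. Moreover, if $t-1$ is good then the coefficient $1 - x_{t-1}(n(t-1))^2$ in $\Phi^e_{t-1}$ vanishes since the pivot just hit $\pm 1$; if $t-1$ is bad the second term is absent anyway. So in either subcase
\[
\Phi^e_{t-1} \;=\; \|\Pi_{V_{t-1}}(\theta)\|_2^2 \;=\; \|\Pi_{V_t}(\theta)\|_2^2 + \|\Pi_{R_t/V_t}(\theta)\|_2^2,
\]
and $\Phi^b_t$ is dominated by this after either dropping the $(1 - x_{t-1}(n(t))^2) \le 1$ factor, or the whole second term when $t$ is bad.

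The second case, where a non-pivot is frozen and $n(t) = n(t-1)$, is where the real work lies. The natural arena is the $2$-dimensional subspace $W := R_{t-1}/V_t$, which contains all three $1$-dimensional subspaces $V_{t-1}/V_t$, $R_t/V_t$, and $R_{t-1}/V_{t-1}$. I would pick an orthonormal basis $\{e_1,e_2\}$ of $W$ with $e_1$ spanning $V_{t-1}/V_t$ and $e_2$ spanning $R_{t-1}/V_{t-1}$; these are orthogonal since the second line lies in $V_{t-1}^\perp$ while the first lies in $V_{t-1}$. The remaining line $R_t/V_t$ then sits in $W$ at some angle $\alpha$, spanned by $\cos\alpha\,e_1 + \sin\alpha\,e_2$. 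Setting $a = \langle\theta, e_1\rangle$, $b = \langle\theta, e_2\rangle$, and $c = 1 - x_{t-1}(n(t))^2 \in (0,1]$ (the same weight on both sides since the pivot is unchanged), the critical good/good subcase becomes
\[
\Phi^e_{t-1} - \Phi^b_t \;=\; a^2 + c\bigl[b^2 - (\cos\alpha\,a + \sin\alpha\,b)^2\bigr],
\]
and nonnegativity reduces to $c(\cos\alpha\,a + \sin\alpha\,b)^2 \le a^2 + c b^2$, which is the weighted Cauchy--Schwarz $(\cos\alpha\,a + \sin\alpha\,b)^2 \le (\cos^2\alpha + \tfrac{\sin^2\alpha}{c})(a^2 + c b^2)$ combined with the bound $\cos^2\alpha + \tfrac{\sin^2\alpha}{c} \le \tfrac{1}{c}$, which is equivalent to $c \le 1$. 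The other subcases are easier: a bad step drops a nonnegative weighted term and the inequality then follows from $V_t \subseteq V_{t-1}$, while the configuration ``$t-1$ bad, $t$ good'' within one phase is ruled out because $\|\Pi_{V_{f_t-1}/V_t}(\theta)\|_2$ is monotone in $t$ during a phase.

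The second conclusion $Z_t - Z_{t-1} \le X_t$ is then immediate from~\eqref{eq:increment}, since $Z_t - Z_{t-1} = X_t + 4(\Phi^b_t - \Phi^e_{t-1}) \le X_t$. The main obstacle is Case 2: spotting the correct $2$-dimensional plane $W$ and the basis that diagonalises the two extreme projections, and then recognising that the physical constraint $c = 1 - x_{t-1}(n(t))^2 \le 1$ is exactly what makes the weighted Cauchy--Schwarz close.
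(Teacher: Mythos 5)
Your proof is correct and follows essentially the same route as the paper's: in the crux case (same pivot, both steps good), both proofs rest on the orthogonal decomposition $R_{t-1}/V_t = (V_{t-1}/V_t) \oplus (R_{t-1}/V_{t-1})$ together with the fact that the coefficient $c = 1 - x_{t-1}(n(t))^2$ is at most $1$, and the easy cases are dispatched by $V_t \subseteq V_{t-1}$ and $R_t \subseteq V_{t-1}$ exactly as you do. Your explicit $2$-dimensional picture with the angle $\alpha$ and the weighted Cauchy--Schwarz inequality is a coordinatized reformulation of the paper's two steps, which first bound $\|\Pi_{R_t/V_t}(\theta)\|_2 \le \|\Pi_{R_{t-1}/V_t}(\theta)\|_2$ via $R_t \subseteq R_{t-1}$ and then use $c(a^2+b^2) \le a^2+cb^2$; you instead keep track of the angle and absorb both steps into one inequality. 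One minor inaccuracy to flag: the algorithm does not guarantee that exactly one index freezes per step (several coordinates of $x_{t-1}+\delta_t u_t$ could simultaneously hit $\pm 1$), so $V_{t-1}/V_t$ need not be a line and $W = R_{t-1}/V_t$ need not be a plane. This does not break your argument --- one just replaces the signed coordinate $a = \langle \theta, e_1\rangle$ by the norm $\|\Pi_{V_{t-1}/V_t}(\theta)\|_2$ and decomposes the unit vector spanning $R_t/V_t$ into its components along $V_{t-1}/V_t$ and $R_{t-1}/V_{t-1}$ --- but your stated equality $R_t = V_{t-1}$ in Case~1 should be the containment $R_t \subseteq V_{t-1}$, which is all that is used.
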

\begin{proof}
For $t=1$, the statement is trivial since $\Phi^e_0 = \|\theta\|_2^2$ and
clearly $\Phi^b_t \leq \|\theta\|_2^2$ for all $t$. Thus, we may assume $t \geq 2$.
If $t$ is bad, then using $V_t \subseteq V_{t-1}$ we get $\Phi_{t}^b = \|\Pi_{V_{t}}(\theta)\|_2^2 \leq
\|\Pi_{V_{t-1}}(\theta)\|_2^2 \leq \Phi_{t-1}^e$. So, we may assume from now on that $t$ is good. 

If the time step $t$ is the beginning of a new phase, then
\begin{align*}
\Phi^b_t &= \|\Pi_{V_t}(\theta)\|_2^2 + (1-x_{t-1}(n(t))^2) \|\Pi_{R_{t}/V_{t}}(\theta)\|_2^2 \\
         &\leq \|\Pi_{V_{t}}(\theta)\|_2^2 + \|\Pi_{R_{t}/V_{t}}(\theta)\|_2^2 \\
         &= \|\Pi_{R_t}(\theta)\|_2^2 \leq \|\Pi_{V_{t-1}}(\theta)\|_2^2 \leq \Phi^e_{t-1} .
\end{align*}
Here the inequality $\|\Pi_{R_{t}}(\theta)\|_2 \leq \|\Pi_{V_{t-1}}(\theta)\|_2$
follows by our assumption that $t$ is the beginning of a new phase and hence $A_t \subseteq A_{t-1} \setminus \{n(t-1)\}$ i.e. $R_{t} \subseteq V_{t-1}$.

Lastly, if $t$ is good and not the beginning of a new phase, then note that $t-1$ is also good and that $n(t) = n(t-1)$.  Thus
\begin{align*}
\Phi^b_t &= \|\Pi_{V_{t}}(\theta)\|_2^2 + (1-x_{t-1}(n(t))^2)\|\Pi_{R_{t}/V_{t}}(\theta)\|_2^2 \\
   &\leq \|\Pi_{V_{t}}(\theta)\|_2^2 + (1-x_{t-1}(n(t))^2)\|\Pi_{R_{t-1}/V_t}(\theta)\|_2^2
      \quad \left( \text{ using } R_{t} \subseteq R_{t-1}~\right) \\
    &= \|\Pi_{V_{t}}(\theta)\|_2^2 +
    (1-x_{t-1}(n(t))^2)\left(\|\Pi_{V_{t-1}/V_t}(\theta)\|_2^2 + \|\Pi_{R_{t-1}/V_{t-1}}(\theta)\|_2^2\right) \\
    &\leq \left(\|\Pi_{V_{t}}(\theta)\|_2^2 + \|\Pi_{V_{t-1}/V_{t}}(\theta)\|_2^2\right) +
          (1-x_{t-1}(n(t))^2) \|\Pi_{R_{t-1}/V_{t-1}}(\theta)\|_2^2 \\
    &= \|\Pi_{V_{t-1}}(\theta)\|_2^2 + (1-x_{t-1}(n(t))^2)
\|\Pi_{R_{t-1}/V_{t-1}}(\theta)\|_2^2 = \Phi_{t-1}^e 
\end{align*}
as needed.
\end{proof}

%Given the above lemma, we see that the increments satisfy
%\[
%Z_t-Z_{t-1} \leq X_t \quad \forall t \in [T] \text{.}
%\]
We now show the increment bounds $X_t$ satisfy the negative drift
requirements of Lemma~\ref{lem:prel2}.

\begin{lemma}
\label{lem:inc-reqs}
For $t \in [T]$, $|X_t| \leq 1$ and $\E[X_t+X_t^2|Z_1,\ldots,Z_{t-1}] \leq 0$.
\end{lemma}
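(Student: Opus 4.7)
The plan is to split on whether $t$ is bad or good. The bad case is trivial since $X_t=0$, so the whole task reduces to the good case. I would set up notation that isolates the single ``live'' direction for the walk, show the key bound $|\beta|\le 1/8$ on the discrepancy direction's component of $\theta$, and then verify both claims by direct computation of the first two moments of $\delta_t$.

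\textbf{Setup.} Assuming (WLOG, by the standard preprocessing) that $v_1,\ldots,v_n$ are linearly independent, the subspace $R_t/V_t$ is one-dimensional and spanned by $v^\perp(t)$. Define $\alpha:=\|v^\perp(t)\|_2\in(0,1]$, $u:=v^\perp(t)/\alpha$, and $\beta:=\ip{\theta}{u}$. Then $\ip{\theta}{v^\perp(t)}=\alpha\beta$ and $\|\Pi_{R_t/V_t}(\theta)\|_2=|\beta|$. At the start of any phase $f_t>1$, we have $R_{f_t}=V_{f_t-1}$ (the previous pivot was frozen), and for $f_t=1$ the inclusion is trivial; since $R_t\subseteq R_{f_t}$, we conclude $R_t/V_t\subseteq V_{f_t-1}/V_t$. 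Hence
\[
|\beta|=\|\Pi_{R_t/V_t}(\theta)\|_2 \le \|\Pi_{V_{f_t-1}/V_t}(\theta)\|_2\le 1/8
\]
by the definition of ``good.''

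\textbf{Rewriting $X_t$.} The identity $\delta_t+2x_{t-1}(n(t))=x_{t-1}(n(t))+x_t(n(t))$ lets me factor
\[
X_t=\delta_t\bigl[\alpha\beta-4\beta^2(x_{t-1}(n(t))+x_t(n(t)))\bigr].
\]
Since $\alpha\le 1$, $|x_{t-1}(n(t))+x_t(n(t))|\le 2$, and $|\delta_t|\le 2$ by Lemma~\ref{lem:sumofdelta}, the bracket is bounded by $|\beta|+8\beta^2$ and so $|X_t|\le 2(|\beta|+8\beta^2)\le 2\cdot(1/8+1/8)=1/2\le 1$, giving the first claim.

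\textbf{Negative drift.} Write $X_t=A\delta_t+B\delta_t^2$ with $A=\alpha\beta-8x_{t-1}(n(t))\beta^2$ and $B=-4\beta^2$; all of $\alpha,\beta,A,B$ are measurable with respect to the history $\mathcal{F}_{t-1}$. The two-point distribution of $\delta_t$ gives $\E[\delta_t\mid\mathcal{F}_{t-1}]=0$ and $\E[\delta_t^2\mid\mathcal{F}_{t-1}]=-\delta_t^-\delta_t^+\ge 0$, so $\E[X_t\mid\mathcal{F}_{t-1}]=B\E[\delta_t^2\mid\mathcal{F}_{t-1}]=-4\beta^2\E[\delta_t^2\mid\mathcal{F}_{t-1}]$. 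Using the factored form $X_t^2=\delta_t^2(A+B\delta_t)^2$ together with $|A+B\delta_t|\le|\beta|+8\beta^2$, I bound $\E[X_t^2\mid\mathcal{F}_{t-1}]\le(|\beta|+8\beta^2)^2\E[\delta_t^2\mid\mathcal{F}_{t-1}]$. Adding,
\[
\E[X_t+X_t^2\mid\mathcal{F}_{t-1}]\le\E[\delta_t^2\mid\mathcal{F}_{t-1}]\bigl[(|\beta|+8\beta^2)^2-4\beta^2\bigr]\le 0,
\]
since the bracket is non-positive exactly when $|\beta|+8\beta^2\le 2|\beta|$, i.e.\ $|\beta|\le 1/8$. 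The tower property then transfers this to $\E[X_t+X_t^2\mid Z_1,\ldots,Z_{t-1}]\le 0$.

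\textbf{Main obstacle.} The calculation is sharp at $|\beta|=1/8$: if I had bounded $|A+B\delta_t|$ crudely by $|A|+|B|\cdot|\delta_t|\le|\beta|+16\beta^2$ (using $|\delta_t|\le 2$ directly inside $B\delta_t$), the drift would become proportional to $-4\beta^2+(|\beta|+16\beta^2)^2$, which is strictly positive at the boundary $|\beta|=1/8$. So the entire proof hinges on noticing that $\delta_t+2x_{t-1}(n(t))$ is the \emph{sum} of two quantities in $[-1,1]$, gaining the factor-of-two that makes the threshold $1/8$ in the definition of ``good'' and the factor $4$ in the potential $Z_t=Y_t+4\Phi_t^e$ match up exactly.
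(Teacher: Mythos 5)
Your proof is correct and follows essentially the same route as the paper's: split on good/bad, observe $\|\Pi_{R_t/V_t}(\theta)\|_2 \le 1/8$ (you justify the inclusion $R_t/V_t \subseteq V_{f_t-1}/V_t$ more explicitly than the paper does, which is a nice touch), use the key fact that $\delta_t+2x_{t-1}(n(t))$ is a sum of two values in $[-1,1]$ and so has magnitude at most $2$, and then compute the first two conditional moments of $\delta_t$. The one cosmetic difference is in bounding $\E[X_t^2]$: the paper applies $(a+b)^2 \le 2a^2+2b^2$ to the two terms of $X_t$ and then plugs in $\bar\theta_t \le 1/8$, while you keep $X_t^2 = \delta_t^2(A+B\delta_t)^2$ in factored form and bound $|A+B\delta_t| \le |\beta|+8\beta^2$ directly; both yield the same sharp threshold at $|\beta|=1/8$, so this is a stylistic rather than structural distinction.
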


\begin{proof}
Clearly, we may assume that $t$ is good, since otherwise $X_t = 0$ and the statement
is trivial. For simplicity of notation, let $\Omega_{t-1}$ denote all the random choices made
by the algorithm in the first $t-1$ time steps. Note that $\Omega_{t-1}$ determines in particular $Z_1,\ldots,Z_{t-1}$.
We shorthand $\E_{t-1}[\cdot] := \E[\cdot|\Omega_{t-1}]$. Let us further denote
$\theta_t := \ip{v^\perp(t)}{\theta}$, $\bar{\theta}_t =
\|\Pi_{R_t/V_t}(\theta)\|_2$, and $x := x_{t-1}(n(t))$. With this notation, we
have that
\[
X_t = \delta_t \theta_t - 4 \delta_t(\delta_t+2x) \bar{\theta}_t^2 .
\]
Since $v^\perp(t) \in R_t/V_t$, note that $|\theta_t| \leq \|v^\perp(t)\|_2
\bar{\theta}_t \leq \bar{\theta}_t$. By definition of $t$ being good, we have that $\bar{\theta}_t
\leq 1/8$. Since $x \in [-1,1]$ and $x+\delta_t \in [-1,1]$, we deduce the
following simple bounds
\begin{equation}\label{eq:simplebounds}
|\delta_t| \leq 2, \quad |\delta_t(\delta_t+2x)| = |(\delta_t+x)^2-x^2| \leq 1,
\quad |\delta_t+2x| = |(\delta_t+x) + x| \leq 2 .
\end{equation}

Using \eqref{eq:simplebounds}, we see that
\[
|X_t| \leq |\delta_t||\theta_t| + 4|\delta_t(\delta_t+2x)|\bar{\theta}_t^2
      \leq 2(1/8) + 4(1)(1/8^2) \leq 1  .
\]
%\nbnote{The $(1)$ in last term above should be 4?}
Next, we have that $\E_{t-1}[X_t] = - 4 \E_{t-1}[\delta_t^2] \bar{\theta}_t^2$
since $\E_{t-1}[\delta_t] = 0$. Lastly, using the inequality $(a+b)^2 \leq
2a^2+2b^2$, we get that
\begin{align*}
X_t^2 &\leq 2 \delta_t^2 \theta_t^2 + 2(16)\delta_t^2(\delta_t+2x)^2  \bar{\theta}_t^4 \\
      &\leq 2 \delta_t^2 \bar{\theta}_t^2 + 2(16) \delta_t^2 (4)(1/8)^2 \bar{\theta}_t^2
      = 4 \delta_t^2 \bar{\theta}_t^2 .
\end{align*}
%\nbnote{The (4) in the calculation above should be a 2?}
Thus $\E_{t-1}[X_t^2] \le 4 \E_{t-1}[\delta_t^2] \bar{\theta}_t^2$ and hence
$\E_{t-1}[X_t+X_t^2] \le 0$. As $Z_1,\ldots,Z_{t-1}$ are determined by $\Omega_{t-1}$,
this implies that $\E[X_t+X_t^2|Z_1,\ldots,Z_{t-1}] \leq 0$, as needed.
\end{proof}
%\nbnote{Seems like choose $\theta_2 \geq 1/6$ in the definition of bad works just as well, and gives better subgaussian constant right away.}

We now prove the main moment bound.

\begin{proof}[Proof of Theorem~\ref{thm:moment-bound}]
To begin we see that
\begin{align*}
\E\left[e^{\disc_n}\right] &\leq \E\left[e^{Y_T + |\sum_{t \in \mathcal{B}} \Delta_t \disc|}\right] \\
          &\leq \E\left[e^{Y_T + 16\|\theta\|_2^2}\right] \quad \left(\text{ by Lemma~\ref{lem:l1} } \right) \\
          &\leq \E\left[e^{Z_n}\right] e^{16\|\theta\|_2^2} \quad \left(\text{ as } Y_T \leq Z_T=Z_n~\right) .
\end{align*}
Recall that $Z_0 = 4\|\theta\|_2^2$. Furthermore, by Lemma~\ref{lem:freedrop} we have
$Z_t-Z_{t-1} \leq X_t$, by Lemma~\ref{lem:inc-reqs} we have $|X_t| \leq 1$ and
$\E\left[X_t+X_t^2|Z_1,\ldots,Z_{t-1}\right] \leq 0$ for all $t \in [T]$, and for $t\in[T+1,n]$ we have $Z_t-Z_{t-1}=0$ by definition. Therefore, applying
Lemma~\ref{lem:prel2} gives that
\[
\E\left[e^{Z_n}\right] \leq e^{Z_0} = e^{4\|\theta\|_2^2} .
\]
Thus, combining everything together, we get
$
\E\left[e^{\disc_n}\right] \leq \E\left[e^{Z_n}\right] e^{16\|\theta\|_2^2} \leq e^{20\|\theta\|_2^2}$ as needed.
\end{proof}

\section{Applications}
\label{sec:applications}
In this section, we list some applications of our main result.
%\sgnote{removed komlos}
%\subsection{Koml\'{o}s and the Beck-Fiala problem}
%
%\nbnote{A bit strange to have this as subsection, given that is already known. Also it is subsumed by the $\ell_p$ version}
%\sgnote{I thought since these were the most popular applications known, some people might want to see their proof via banaszczyk}
%
%In this subsection we prove Corollary~\ref{thm:komlos}. \\
%
%\noindent{\bf Corollary~\ref{thm:komlos} (restated).} {\em
%There is an efficient randomized algorithm which given an $m\times n$ matrix $A$ having all columns of $\ell_2$ norm at most one, finds a coloring $x\in \{-1,1\}^n$ such that $\|Ax\|_\infty=O(\sqrt{\log n})$.\\
%}
%
%Let us first see how the existence of a coloring satisfying $\|Ax\|_\infty=O(\sqrt{\log n})$ was proved in $\cite{B97}$. After using standard arguments to reduce to the case $m\le n$, the proof follows by defining a convex body $K$ as
%\[ K=\{y\in\mathbb{R}^m : \| y \|_{\infty} \le c\sqrt{\log n}  \} .\]
%Standard Gaussian tail bounds along with union bound give that for $c>0$ a large enough constant, $\gamma_m(K)\ge 1/2$. Theorem~\ref{thm:bana} now implies the existence of a coloring $x$ such that $Ax\in 5K$ and thus has discrepancy at most $5c\sqrt{\log n}$.
%
%Using Theorem~\ref{thm:main} along with Theorem~\ref{thm:subgaussian}, we directly get a proof of Corollary~\ref{thm:komlos}.
%
%
\subsection{$\ell_p$ variant of Koml\'{o}s problem}
%\nbnote{Seems convoluted. Why not just say that by our theorem $Ax$ is subgaussian, and that $\ell_p$ norms a gaussian
%is $O(\sqrt{p})$. Will by just a 2 line corollary then} \sgnote{fixed}
%In this subsection we prove a better bound on the $\ell_p$ discrepancy:\\

\noindent{\bf Corollary~\ref{thm:lpdiscrepancy} (restated).} {\em
There is an efficient randomized algorithm which
given an $m\times n$ matrix $A$ having all columns of $\ell_2$ norm at most one and $p\in [1,\infty)$, finds a coloring $x\in\{-1,1\}^n$ with expected $\ell_p$ discrepancy $O(\sqrt{p})$.
}

\begin{proof}
Let $Y=Ax$.
% It suffices to show
%\[ \left( \E[ \| Y\|_p^p] \right)^{1/p} =O(m^{1/p}p^{1/2}) . \]
By Theorem~\ref{thm:main}, we get that $Y$ is a $\sigma$-subgaussian random vector and hence each component of $Y$ is a $\sigma$-subgaussian random variable. Letting $Y_i$ denote the $i^{th}$ component of $Y$, we get
\[  \E[ \| Y\|_p^p] = \sum_{i=1}^m \E[|Y_i|^p] \le  mC^p p^{p/2}   \]
for a constant $C=C(\sigma)$ depending only on $\sigma$. The inequality above follows from the standard fact that the $p^{th}$ moment of an $O(1)$-subgaussian random variable is at most $C^pp^{p/2}$ (see e.g.~\cite{Roman10}).
\end{proof}

\subsection{Discrepancy relative to $\gamma_2$-norm}
\noindent{\bf Corollary~\ref{thm:herdisc} (restated).} {\em
There exists an efficient randomized algorithm that, given any $m\times n$ matrix $A$ and $J\subseteq[n]$, returns a coloring $x:J\rightarrow\{-1,1\}$ such that with constant probability,
\[ \| A_{|J} x\|_{\infty} = O(\gamma_2(A)\sqrt{\log m}) .\] }
\begin{proof}
It was shown in \cite{MNT14} that for any matrix $A$, there exists an efficiently computable quantity $\gamma_2(A)$ such that
\[ \Omega(\gamma_2(A)/\log m)\le herdisc(A) \le O(\gamma_2(A)\sqrt{\log m}) .\]
The upper bound above was proved using Theorem~\ref{thm:bana} (roughly) as follows: given any set $J\subseteq [n]$, factorize the matrix $A_{|J}$ as $A_{|J}=BC$
where every row of $B$ has $\ell_2$ norm at most $\gamma_2(A)$ and every column of $C$ has $\ell_2$ norm at most one. Such a factorization exists and can be computed by solving an appropriate semidefinite program. We refer the reader to \cite{MNT14} for more details.

Then using the matrix $B$, a convex body $K$ is defined as follows:
\[ K=\{ y\in\mathbb{R}^m: \| By\|_{\infty} \le c\gamma_2(A)\sqrt{\log m} \} . \]
For $c$ a large enough constant, $\gamma_m(K)\ge 1/2$. This follows by standard Gaussian tail bounds and union bound. Now finding a coloring $x\in\{-1,1\}^n$ of the columns of $A|_J$ such that the $\ell_\infty$ norm of $A|_J x=BCx$ is $O(\gamma_2(A)\sqrt{\log m})$ is equivalent to finding a coloring $x$ of the column vectors of $C$ such that $Cx$ lies in $K$. As $C$ has all columns of length at most one, by Theorem~\ref{thm:bana} there exists a coloring $x$ such that the  discrepancy of $A|_J$ is $O(\gamma_2(A)\sqrt{\log m})$.

Now Theorem~\ref{thm:main} (with Theorem~\ref{thm:subgaussian}) directly gives an efficient algorithm to find such a coloring.
\end{proof}

\subsection{A Generalization of Banaszczyk's Result}

%\nbnote{Mention that we use essentially the approach of LSV86.}
%\nbnote{The part towards the end is quite hard to follow.
%Instead of lumping the clean idea and the tedious technicalities at the end together,
%I would restructure this as follows:
% First given the clean proof under the assumption that $K$ contains a ball of size $\beta$
%around origin. Then say we pick $k = log 1/\beta  + \log mn$, the roundoff errors will be eaten up by the ball assumption. After this mention that this already takes care of the most reasonable cases e.g. where $K$ is symmetric, or has volume $>1/2 + \delta$, ....}
%
%\nbnote{If we really to want address $vol(K) = 1/2$, we could just make it a separate lemma with the technicalities.
%Actually, I could not mind leaving this out, and just saying that we can get arbitrarily close,
%or use compactness.
%}

In this subsection we prove the following generalization of Banaszczyk's result. The proof follows along similar lines as the proof of \cite{LSV} showing that linear discrepancy is at most twice the hereditary discrepancy.\\

\noindent{\bf Theorem~\ref{thm:genbana} (restated).} {\em
Let $S_1,S_2,\dots,S_n$  be sets such that for each $i\in [n]$, $S_i \subseteq \mathbb{B}_2^m$ and $0$ lies in the convex hull of $S_i$. Then for any convex body $K$ with $\gamma_m(K)\ge 1/2$, there exist vectors $v_i\in S_i$ such that
$\sum_{i=1}^n v_i \in cK$,
where $c>0$ is an absolute constant. Moreover, there is an efficient algorithm to find these vectors.
}
\begin{proof}
%We first give the proof for the case when $\mathbb{B}_2^m\subseteq 4K$. We will show later how to reduce the general problem to this case.
%\slnote{address the case of asymmetric sets; see why at the end of the proof.}
For technical reasons, we give the proof for the case when $\gamma_m(K)\ge 1/2+\epsilon$ for some $\epsilon>0$. The running time of the algorithm will be proportional to $\log(1/\epsilon)$. The general case, for instance when $\gamma_m(K)=1/2$, is slightly more complicated but can be handled by combining our proof with the techniques in \cite{DGLN16}. We provide a sketch of how to do this towards the end of the proof.

For each $i\in [n]$, as $0$ lies in the convex hull of $S_i$, there exist at most $m+1$ vectors in $S_i$ and a convex combination of them that equals $0$.
That is, there exist $m+1$ vectors $v_{i,j}\in S_i$ and real numbers $x_{i,j}\ge0$ such that
\begin{equation}
\label{eqn:genbana2}
\forall i \in [n]: \quad \sum_{j=1}^{m+1} x_{i,j}=1 \quad\text{ and }\quad\sum_{j=1}^{m+1} x_{i,j}v_{i,j}=0 .
\end{equation}
Our goal will be to round each collection $\{x_{i,j}: j \in [m+1]\}$ such that exactly one of them is $1$ and the rest are $0$, without incurring too much discrepancy.

Assume for now that each of $x_{i,j}$ can be expressed using at most $k$ digits in binary, for some finite $k$; that is, $x_{i,j} \in 2^{-k}\mathbb{Z}$ for all $i,j$.
We will see later how to get rid of this assumption. The main step will to be to reduce $k$ to $k-1$ by rounding the $k$-th bits in each $x_{i,j}$ either up or down,
in such a way that discrepancy stays bounded. Then, we will repeat this operation until we obtain $k=0$, which means that $x_{i,j} \in \{0,1\}$.
We will maintain the property that $x_{i,j} \ge 0$ and that $\sum_j x_{i,j}=1$ for all $i$, which then implies that there exists $j_{i} \in [m+1]$ for each $i$ such that $x_{i,j_i}=1$
and $x_{i,j}=0$ if $j \ne j_i$. The claim will follow by choosing $v_{i,j_i} \in S_i$.

Setting notation, let $x^{(k)}_{i,j} = x_{i,j}$. Assume that we already computed for $\ell \in [k]$ a choice of $x^{(\ell)}_{i,j} \in 2^{-\ell} \mathbb{Z}$ that satisfies
$\sum_j x^{(\ell)}_{i,j}=1$ for all $i \in [n]$. If $x^{(\ell)}_{i,j} \in \{0,1\}$ for all $j$ then there is nothing more to do.
Otherwise,
let $x^{(\ell)}_{i,j}=0.b^{(\ell)}_{i,j,1}b^{(\ell)}_{i,j,2}\dots b^{(\ell)}_{i,j,\ell}$ be the binary expansion of $x^{(\ell)}_{i,j}<1$.
Let $A^{(\ell)}=\{(i,j) \in [n] \times [m+1]: b^{(\ell)}_{i,j,\ell}=1\}$ be the indices where the $\ell$-th bit of $x_{i,j}$ is $1$.
We will construct a coloring $\chi^{(\ell)}:A^{(\ell)} \to \{-1,1\}$ which satisfies
\begin{equation}
\label{eqn:genbana}
 \sum_{(i,j) \in A^{(\ell)}} \chi^{(\ell)}(i,j) v_{i,j} \in c K \quad\text{ and }\quad \forall i\in[n], \; \sum_{j: (i,j) \in A^{(\ell)}}\chi^{(\ell)}(i,j)=0
\end{equation}
for some absolute constant $c>0$. Given such a coloring $\chi^{(\ell)}$, we compute the value of $x^{(\ell-1)}_{i,j}$ as follows:
\[
x^{(\ell-1)}_{i,j} =
\begin{cases}
x^{(\ell)}_{i,j} & \text{if } b_{i,j,\ell}^{(\ell)}=0\\
x^{(\ell)}_{i,j}+2^{-\ell} & \text{if } b_{i,j,\ell}^{(\ell)}=1 \text{ and } \chi^{(\ell)}(i,j)=1\\
x^{(\ell)}_{i,j}-2^{-\ell} & \text{if } b_{i,j,\ell}^{(\ell)}=1 \text{ and } \chi^{(\ell)}(i,j)=-1\\
\end{cases}\;.
\]
Observe that this operation zeros the $\ell$-th bit of all $x^{(\ell)}_{i,j}$, namely $x^{(\ell-1)}_{i,j} \in 2^{-(\ell-1)} \mathbb{Z}$;
it maintains the property that for all $i \in [n]$, $\sum_{j} x^{(\ell-1)}_{i,j}=1$ and $x_{i,j}^{(\ell-1)} \ge 0$;
and it satisfies that
$$
\sum_{i,j} \left( x^{(\ell-1)}_{i,j} - x^{(\ell)}_{i,j} \right) v_{i,j} \in 2^{-\ell} cK.
$$
Thus, repeating this rounding operation for $\ell=k,k-1,\ldots,1$ will result in a choice of $x_{i,j} \in \{0,1\}$ that satisfy $\sum_j x_{i,j}=1$ and
$$
\sum_{i,j} x_{i,j} v_{i,j} \in c K.
$$
That is, there is a choice of $j_i \in [m+1]$ for all $i \in [n]$ such that $\sum_{i,j} v_{i,j_i} \in c K$, as claimed.

It remains to show how to find a coloring satisfying \eqref{eqn:genbana}. Let $A^{(\ell)}_i=\{j: (i,j) \in A^{(\ell)}\}$ be the elements being colored in the set $S_i$ when we
round the $\ell$-th bits. We claim that $|A^{(\ell)}_i|$ must be even. This is since for every $i \in [n]$, the number of elements $x^{(\ell)}_{i,j}$
for which $b^{(\ell)}_{i,j,\ell}=1$ must be even, as $\sum x^{(\ell)}_{i,j}=1$. We will pair up the elements of $A^{(\ell)}_i$ in an arbitrary way, and only consider
colorings $\chi^{(\ell)}$ which give opposite colors to elements in each pair. In such a way, such a coloring automatically satisfies that $\sum_j \chi^{(\ell)}(i,j)=0$ for all $i \in [n]$.

For simplicity of notation, denote the vectors $\{v_{i,j}: j \in A^{(\ell)}_i\}$ by $\{u_{i,1},\ldots,u_{i,2 q_i}\}$ for some integer $q_i=|A^{(\ell)}_i|/2$. Define new vectors
$w_{i,j} = \left(u_{i,2j-1}-u_{i,2j}\right)/2$ for $j \in [q_i]$.
Observe that $\| w_{i,j}\|_2 \le 1$. Apply Theorem~\ref{thm:bana} to the vectors $w_{i,j}$ and the convex body $K$. This gives a coloring $\chi'(i,j)$
for each vector $w_{i,j}$ such that
\[ \sum_{i,j} \chi'(i,j) w_{i,j} \in cK.  \]
We now define the coloring $\chi^{(\ell)}$ to give the color $\chi'(i,j)$ to $u_{i,2j-1}$ and the color $-\chi'(i,j)$ to $u_{i,2j}$.
Clearly this satisfies both the conditions in \eqref{eqn:genbana} with constant $2c$.

We will now show that we can assume the binary expansion to be finite. Concretely, we will show that a preliminary rounding procedure
can allow us to assume that $k \le \log(2mn/\epsilon)$. This is since by truncating each $x_{i,j}$ after $\log (2mn/\epsilon)$ bits, the sum $\sum_{i,j} x_{i,j} v_{i,j}$ changes by at most
\[\sum_{i,j} \frac{1}{2^{\log (2mn/\epsilon)}}v_{i,j} \in \epsilon \mathbb{B}_2^m \subseteq K\]
and thus increases the final value of $c$ by at most $1$. The last containment follows by our assumption that $\gamma_m(K)\ge 1/2+\epsilon$, and thus $K$ must contain a Euclidean ball of radius $r$ which satisfies $\gamma_1([0,r]) \le \epsilon$. Clearly this is true for $r=\epsilon$.

We mention briefly now on how to tackle the case when $\gamma_m(K)<1/2+\epsilon$. This proceeds along similar lines as Theorem 40 from \cite{DGLN16}. The main idea is that we can find a point $p$ such that
\[p\in K\cap \left( \sum_{i} conv(S_i) \right)  \]
where $S_i$ denotes the convex hull of $S_i$ and the summation operator used is Minkowski addition. We then instead solve a new problem on the instance given by convex body $K':=\alpha(K-p)$ and sets $S_i'$ such that $\sum_{i} conv(S_i') =\alpha(\sum_{i} conv(S_i) -p)$ for some constant scaling factor $\alpha>0$. A solution of our original problem is recoverable from a solution of this. $p$ and $\alpha$ moreover satisfy the property that $\gamma_m(K') \ge 1/2+\epsilon$, and we already know how to solve this case.
\end{proof}

\section{Conclusion and Open Questions}
We gave efficient algorithms for several problems to find colorings with discrepancy bounds similar to those achievable using  Banaszczyk's result, Theorem~\ref{thm:bana}. However there are still some problems that use Banaszczyk's techniques in a non-trivial iterative way, for which we are unable to obtain an efficient algorithm.

One such problem is the Tusnady's problem about the discrepancy of axis-parallel boxes in $\mathbb{R}^d$. \cite{Sasho16} used Banaszczyk's technique to prove that the discrepancy is $O_d(\log^{d-1/2}n)$, where $O_d(.)$ hides factors depending only on $d$. Our techniques do not seem to apply here and the best known algorithmic bound is $O_d(\log^d n)$ \cite{BG17}.

Another such problem is the Steinitz problem in the $\ell_2$ norm. Here we are given $n$ vectors $v_1,\dots,v_n\in\mathbb{B}_2^m$ such that $\sum_i v_i=0$, and the goal is to find a rearrangement of these vectors such that the $\ell_2$ norm of the sum of vectors in any prefix along the rearrangement is small. That is, we want to find a permutation $\pi:[n]\rightarrow [n]$ to minimize
\[ \max_{k=1,\ldots,n} \left\| \sum_{i=1}^k  v_{\pi(i)}  \right\|_2 . \]
Banaszczyk in \cite{Bana12}, using techniques from \cite{B97}, showed that there exists a permutation for which the above expression is at most $O(\sqrt{m}+\sqrt{\log n})$, whereas the best known algorithmic bound is $O(\sqrt{m\log n})$ \cite{BG17}.

%\input{potential}

%\section{Conclusion}
%\sgnote{what we did...what's left to do...}
%
\bibliographystyle{alpha}
{\small \bibliography{refr} }

\end{document}